\title{A Simple Algorithm for Graph Reconstruction}
\author{Claire Mathieu}{CNRS Paris, France \and \url{https://www.irif.fr/~claire/}}{claire.mathieu@irif.fr}{}{}
\author{Hang Zhou}{École Polytechnique, France \and \url{http://www.normalesup.org/~zhou/}}{hzhou@lix.polytechnique.fr}{}{}
\authorrunning{C. Mathieu and H. Zhou}
\keywords{reconstruction, network topology, random regular graphs, metric dimension} \nolinenumbers 
\newtheorem{fact}[theorem]{Fact}
\newcommand{\query}{\textsc{Query}}
\newcommand{\expect}[2]{\mathbb{E}_{#1}\big[#2\big]}
\newcommand{\proba}[2]{\mathbb{P}_{#1}\big[#2\big]}
\newcommand{\algoname}{\textsc{Simple}\xspace}
\newcommand{\valgoname}{\textsc{Multi-Phase}\xspace}
\newcommand{\variantname}{\textsc{Simple-Modified}\xspace}
\DeclareMathOperator{\polylog}{polylog}
\newcommand{\apref}[1]{\cref{#1}}
\begin{document}

\maketitle

\begin{abstract}
How efficiently can we find an unknown graph using distance queries between its vertices? We assume that the unknown graph is connected, unweighted, and has bounded degree. The goal is to find every edge in the graph. This problem admits a reconstruction algorithm based on multi-phase Voronoi-cell decomposition and using $\tilde O(n^{3/2})$ distance queries~\cite{kannan2018graph}.

In our work, we analyze a simple reconstruction algorithm. We show that, on random $\Delta$-regular graphs, our algorithm uses $\tilde O(n)$ distance queries. As by-products, we can reconstruct those graphs using $O(\log^2 n)$ queries to an all-distances oracle or $\tilde O(n)$ queries to a betweenness oracle, and we bound the metric dimension of those graphs by $\log^2 n$.

Our reconstruction algorithm has a very simple structure, and is highly parallelizable. On general graphs of bounded degree, our reconstruction algorithm has subquadratic query complexity.
\end{abstract}

\section{Introduction}
Discovering the topology of the Internet is a crucial step for building accurate network models and designing efficient algorithms for Internet applications.
The topology of Internet networks is typically investigated at the router level, using \texttt{traceroute}.
It is a common and reasonably accurate assumption that \texttt{traceroute} generates paths that are shortest in the network.
Unfortunately, sometimes routers block \texttt{traceroute} requests due to privacy and security concerns.
As a consequence, the inference of the network topology is rather based on the end-to-end delay information on those requests, which is roughly proportional to the shortest-path distances in the network.

In the \emph{graph reconstruction} problem, we are given the vertex set $V$ of a hidden connected, undirected, and unweighted graph and have access to information about the topology of the graph via an oracle, and the goal is to find every edge in $E$.
Henceforth, unless explicitly mentioned, all graphs studied are assumed to be connected.
This assumption is standard and shared by almost all references on the subject, e.g., \cite{beerliova2006network,erlebach2006network,kannan2018graph,rong2021reconstruction,sen2010covert}.
The efficiency of an algorithm is measured by the \emph{query complexity}, i.e., the number of queries to the oracle.
Motivated by \texttt{traceroute}, the literature has explored several types of query oracles.
\begin{itemize}
    \item
    One type consists of \emph{all-shortest-paths} and \emph{all-distances} queries, when querying a vertex yields either shortest paths from that vertex to all other vertices~\cite{beerliova2006network,sen2010covert} or distances from that vertex to all other vertices~\cite{erlebach2006network}. The latter, of course, is less informative.
    \item A more refined type of query oracles, suggested in~\cite{beerliova2006network,erlebach2006network}, consists of  \emph{shortest-path} and \emph{distance} queries, when querying a pair of vertices yields  either a shortest path or the distance between them~\cite{kannan2018graph,reyzin2007learning,rong2021reconstruction}. Again, the latter is less informative.
\end{itemize}

In this work, we focus on the weakest of those four query oracles, that takes as input a pair of vertices $a$ and $b$ and returns the distance $\delta(a,b)$ between them.
Reyzin and Srivastava~\cite{reyzin2007learning} showed that graph reconstruction requires $\Omega(n^2)$ distance queries on general graphs, so we focus on the bounded degree case.
For graphs of bounded degree, Kannan, Mathieu, and Zhou~\cite{kannan2018graph} gave a reconstruction algorithm based on multi-phase Voronoi-cell decomposition and using $\tilde O(n^{3/2})$ distance queries, and raised an open question of whether  $\tilde O(n)$ is achievable.\footnote{The notation $\tilde O(f(n))$ stands for $O(f(n)\cdot \polylog f(n))$.}

We provide a partial answer to that open question by analyzing a simple reconstruction algorithm (\cref{algo:main}).
We show that, on (uniformly) random $\Delta$-regular graphs, where every vertex has the same degree $\Delta$, our reconstruction algorithm uses $\tilde O(n)$ distance queries (\cref{thm:random}).
As by-products, we can reconstruct those graphs using $O(\log^2 n)$ queries to an all-distances oracle (\cref{cor:all-distances}) or using $\tilde O(n)$ queries to a betweenness oracle (\cref{cor:betweenness}), and we bound the metric dimension of those graphs by at most $\log^2 n$ (\cref{cor:metric_dimention}).

Our analysis exploits the \emph{locally tree-like} property of random $\Delta$-regular graphs, meaning that these graphs contain a small number of short cycles.
Our method might be applicable to other locally tree-like graphs, such as Erd\"os-R\'enyi random graphs and \emph{scale-free} graphs.
In particular, many real world networks, such as Internet networks, social networks, and peer-to-peer networks, are believed to have scale-free properties~\cite{barabasi1999emergence,jovanovic2001modeling,newman2002random}.
We defer the reconstruction of those networks for future work.

Our reconstruction algorithm has a very simple structure, and is highly parallelizable (\cref{cor:parallel}).
On general graphs of bounded degree, the same reconstruction algorithm has subquadratic query complexity (\cref{thm:general}).

\subsection{Related Work}
The problem of reconstructing a graph using queries that reveal partial information has been extensively studied in different contexts and has many applications.

\paragraph*{Reconstruction of Random Graphs}
The gist of our paper deals with random graphs.
The graph reconstruction problem has already attracted much interest in the setting of random graphs.
On Erd\"os-R\'enyi random graphs, Erlebach, Hall, and Mihal’{\'a}k~\cite{erlebach2007approximate} studied the approximate network reconstruction using all-shortest-paths queries;
Anandkumar, Hassidim, and Kelner~\cite{anandkumar2013topology} used end-to-end measurements between a subset of vertices to approximate the network structure.
Experimental results to reconstruct random graphs using shortest-path queries were given in~\cite{blondel2007distance,guillaume2005complex}.

On random $\Delta$-regular graphs, Achlioptas~et~al.~\cite{achlioptas2009bias} studied the bias of \texttt{traceroute} sampling in the context of the network reconstruction.
They showed that the structure revealed by \texttt{traceroute} sampling on random $\Delta$-regular graphs admits a power-law degree distribution~\cite{achlioptas2009bias}, a common phenomenon as in  Erd\"os-R\'enyi random graphs~\cite{lakhina2003sampling} and Internet networks~\cite{faloutsos1999}.

\paragraph*{Metric Dimension and Related Problems}
Our work yields an upper bound on the \emph{metric dimension} of random $\Delta$-regular graphs.
The metric dimension problem was first introduced by Slater~\cite{slater1975leaves} and Harary and Melter~\cite{harary1976metric}, see also \cite{bailey2011base,caceres2007metric,chartrand2000resolvability,hernando2010extremal,khuller1996landmarks,oellermann2007strong,ramirez2016simultaneous,sebHo2004metric}.
The metric dimension of a graph is the cardinality of a smallest subset $S$ of vertices such that every vertex in the graph has a unique vector of distances to the vertices in $S$.
On regular graphs, the metric dimension problem was studied in special cases~\cite{chartrand2000resolvability,javaid2008families}.
In Erd\"os-R\'enyi random graphs, the metric dimension problem was studied by Bollob{\'a}s, Mitsche, and Pra{\l}at~\cite{bollobas2013metric}.
Mitsche and Ru\'e~\cite{mitsche2015limiting} also considered the random forest model.

A related problem is the \emph{identifying code} of a graph~\cite{karpovsky1998new}, which is a smallest subset of vertices such that every vertex of the graph is uniquely determined by its neighbourhood within this subset.
The identifying code problem was studied on random $\Delta$-regular graphs~\cite{foucaud2012bounds} and on Erd\"os-R\'enyi random graphs~\cite{frieze2007}.
Other related problems received attentions on random graphs as well, such as the
\emph{sequential metric dimension}~\cite{odor2020sequential} and the \emph{seeded graph matching}~\cite{mossel2020seeded}.

\paragraph*{Betweenness Oracle}
There exists an oracle that is even weaker than the distance oracle: the \emph{betweenness} oracle~\cite{abrahamsen2016graph}, which receives three vertices $u$, $v$, and $w$ and returns whether $w$ lies on a shortest path between $u$ and $v$.
Our work yields a reconstruction algorithm using $\tilde O(n)$ betweenness queries on random $\Delta$-regular graphs.
For graphs of bounded degree, Abrahamsen et al.~\cite{abrahamsen2016graph} generalized the $\tilde O(n^{3/2})$ result in the distance oracle model from \cite{kannan2018graph} to the betweenness oracle model.

\paragraph*{Tree Reconstruction and Parallel Setting}
Our paper focuses on the distance oracle and bounded degree, and considers the parallel setting.
All of those aspects were previously raised in the special case of the \emph{tree reconstruction}.
Indeed, motivated by the reconstruction of a phylogenetic tree in evolutionary biology, the tree reconstruction problem using a distance oracle is well-studied~\cite{hein1989optimal,king2003complexity,waterman1977additive}, in particular assuming bounded degree~\cite{hein1989optimal}.
Afshar et al.~\cite{afshar2020reconstructing} studied the tree reconstruction in the parallel setting, analyzing both the \emph{round complexity} and the \emph{query complexity} in the relative distance query model~\cite{kannan1996determining}.

\subsection{Our Results}
\label{sec:results}
Our reconstruction algorithm, called \algoname, is given in \cref{algo:main}. It takes as input the vertex set $V$ of size $n$ and an integer parameter $s\in[1,n]$.

\begin{algorithm}
\caption{\algoname$(V,s)$}
\label{algo:main}
\begin{algorithmic}[1]
    \State $S\gets$ sample of $s$ vertices selected uniformly and independently at random from $V$
    \For{$u\in S$ and $v\in V$}
         \query$(u,v)$
    \EndFor
    \State $\hat E\gets$ set of vertex pairs $\{a,b\}\subseteq V$ such that, for all $u\in S$, $|\delta(u,a)-\delta(u,b)|\leq 1$
    \For{$\{a,b\}\in \hat E$}
         \query$(a,b)$
    \EndFor
    \State \Return set of vertex pairs $\{a,b\}\in \hat E$ such that $\delta(a,b)=1$ \label{line:return}
\end{algorithmic}
\end{algorithm}
Intuitively, the set $\hat E$ constructed in \algoname consists of all vertex pairs $\{a,b\}\subseteq V$ that \emph{might} be an edge in $E$.
In order to obtain the edge set~$E$, it suffices to query uniquely the vertex pairs in~$\hat E$.
We remark that \algoname correctly reconstructs the graph for any parameter $s \in [1,n]$, and that choosing an appropriate $s$ only affects the query complexity, see \cref{lem:query}.

\subsubsection{Random Regular Graphs}
Our first main result shows that \algoname (\cref{algo:main}) uses $\tilde O(n)$ distance queries on random $\Delta$-regular graphs for an appropriately chosen $s$ (\cref{thm:random}).
The analysis  exploits the \emph{locally tree-like} property of random $\Delta$-regular graphs.
The proof of \cref{thm:random} consists of several technical novelties, based on a new concept of \emph{interesting vertices} (\cref{def:interesting}). See \cref{sec:random}.
\begin{theorem}
\label{thm:random}
Consider a uniformly random  $\Delta$-regular graph with $\Delta=O(1)$.
Let $s=\log^2 n$.
In the distance query model, \algoname (\cref{algo:main}) is a reconstruction algorithm using $\tilde O(n)$ queries in expectation.
\end{theorem}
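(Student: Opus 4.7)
The total query complexity equals $s \cdot n + |\hat E|$. With $s = \log^2 n$ the first term is $n \log^2 n = \tilde O(n)$, so the task reduces to showing $\mathbb{E}[|\hat E|] = \tilde O(n)$. I split $\hat E$ into the true edges, of which there are at most $\Delta n / 2 = O(n)$, and the non-edges $\{a,b\}$ with $\delta(a,b) \geq 2$ that slipped through the filter because no $u \in S$ witnessed $|\delta(u,a)-\delta(u,b)| \geq 2$. For such a pair, let $\mu(a,b) = |\{u \in V : |\delta(u,a)-\delta(u,b)| \leq 1\}|/n$; since $S$ is sampled uniformly and independently from $V$, the probability that $\{a,b\}$ survives is exactly $\mu(a,b)^s$, and thus $\mathbb{E}[|\hat E|] \leq O(n) + \sum_{\{a,b\}:\,\delta(a,b)\geq 2} \mu(a,b)^s$.

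The key ingredient is the locally tree-like structure of a random $\Delta$-regular graph: with high probability the ball of radius $r=\tfrac{1}{3}\log_{\Delta-1} n$ around every vertex contains at most one cycle. In an infinite $(\Delta-1)$-ary tree, $|\delta(u,a)-\delta(u,b)| \leq 1$ forces the branch vertex $w$ where the BFS paths $u\to a$ and $u\to b$ separate to lie near the midpoint of the $a,b$-geodesic, and a direct count gives $\mu(a,b) = O((\Delta-1)^{-\delta(a,b)/2})$. For almost all pairs this is bounded well away from $1$, so $\mu(a,b)^s \leq 2^{-\log^2 n} = n^{-\log n}$, and even summing over the trivial $\binom{n}{2}$ pairs yields a negligible $o(1)$ contribution.

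The delicate part is the set of \emph{bad} pairs whose $\mu(a,b)$ is close to $1$. Such behavior arises only when tree-likeness fails in a specific way for $a$, for $b$, or in the region where the BFS trees from $a$ and $b$ overlap. I would use the notion of \emph{interesting vertex} of \cref{def:interesting} to formalize this: argue (a) that if neither $a$ nor $b$ is interesting then $\mu(a,b) \leq 1/2$, and (b) that with high probability the number of interesting vertices is $\polylog n$. Item~(b) bounds the total contribution of bad pairs to $|\hat E|$ by $n \cdot \polylog n = \tilde O(n)$ via the trivial bound $\mu(a,b)^s \leq 1$; item~(a) covers the remaining pairs through the tree-like tail estimate above.

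The main obstacle will be calibrating the definition of \emph{interesting} so that (a) and (b) hold simultaneously. The configuration-model argument for sparsity of short cycles is standard, but translating it into a uniform bound $\mu(a,b) \leq 1/2$ for every non-interesting pair --- including pairs at intermediate distances where the BFS balls from $a$ and $b$ have grown to a constant fraction of $V$ and begin to overlap nontrivially --- requires careful bookkeeping of the possible branch vertices $w$ and of how the parity of $\delta(a,b)$ interacts with the condition $|\delta(u,a)-\delta(u,b)|\leq 1$. Propagating this argument uniformly over all distances $d$ from $2$ up to the diameter $\Theta(\log n)$ is where the real combinatorial work of the proof will live.
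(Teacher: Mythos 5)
Your skeleton matches the paper's (query count $n\cdot s+|\hat E|$ as in \cref{lem:query}, per-pair survival probability $\mu(a,b)^s$ over the i.i.d.\ sample, exceptional pairs handled separately as in \cref{lem:B}), but the step that carries your whole argument --- that local tree-likeness yields $\mu(a,b)=O\big((\Delta-1)^{-\delta(a,b)/2}\big)$, or even the uniform bound $\mu(a,b)\le 1/2$ for all non-exceptional pairs --- is a genuine gap, and it does not follow from the locally tree-like property. Tree-likeness around $\{a,b\}$ can only be certified (simultaneously for all pairs, with the $1-o(n^{-2})$ per-pair confidence a union bound needs) for neighbourhoods containing far fewer than $n$ vertices; even the paper's multi-scale argument (\cref{lem:dispensable}) only reaches balls of size roughly $n/\log n$. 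The infinite-tree branch-point count therefore certifies that a constant fraction \emph{of the vertices inside that ball} distinguish $a$ and $b$, i.e.\ $1-\mu(a,b)=\Omega(1/\log n)$, not $\Omega(1)$: the overwhelming majority of the $n$ vertices lie at distance close to the diameter from both $a$ and $b$, outside any tree-like region, and whether $|\delta(u,a)-\delta(u,b)|\le 1$ for those $u$ is not controlled by local structure at all. (If a uniform constant bound on $\mu$ were provable, one could take $s=O(\log n)$ and get metric dimension $O(\log n)$, strictly stronger than what the paper obtains even after \cref{remark:extra-work}; this is a sign that your "bounded well away from $1$" step is the crux, not a routine count.) The paper deliberately proves only the weaker estimate calibrated to $s=\log^2 n$: the Structural Lemma (\cref{lem:Fvw}) gives $|D(a,b)|>3n/\log n$ with probability $1-o(n^{-2})$ per pair, so $(1-3/\log n)^{\log^2 n}=o(n^{-2})$ absorbs the union over pairs; proving that lemma is the real work, via a configuration-model edge-exposure argument counting dispensable edges over $\lceil\log\log n\rceil$ geometric scales and converting tree-like growth into distinguishers through interesting vertices (\cref{lem:interesting-distinguish,lem:number-interesting}).

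Your items (a) and (b) also invert the paper's notion from \cref{def:interesting}: there, interesting vertices are defined relative to the pair $\{v,w\}$ and the edge-exposure order, and they are the \emph{distinguishers} --- about $(\Delta-2)(\Delta-1)^{k-1}$ of them at radius $k$, i.e.\ $\Theta(n/\log n)$ per pair --- not a $\polylog n$-sized exceptional set. Your accounting (at most $n\cdot\polylog n$ bad pairs surviving, each queried once in the second stage) would indeed suffice for $\tilde O(n)$ total queries, but both halves of the plan are unsubstantiated: (b) because at the scales where most vertices live the graph is simply not tree-like, so "tree-likeness fails" is not a rare event confined to $\polylog n$ vertices once balls reach a constant fraction of $V$; and (a) because, as above, even a perfectly tree-like neighbourhood around a distance-2 pair says nothing about the far vertices, so no calibration of "interesting" makes $\mu\le 1/2$ follow from local information. (A smaller issue: "every ball of radius $\tfrac13\log_{\Delta-1}n$ contains at most one cycle w.h.p." is itself too strong as stated and needs a smaller radius.) The "careful bookkeeping" you defer in your last paragraph is exactly the difficulty the paper circumvents by aiming at the weaker target $|D(a,b)|\gtrsim n/\log n$ and compensating with $s=\log^2 n$.
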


We extend \algoname and its analysis to reconstruct random $\Delta$-regular graphs in the all-distances query model (\cref{cor:all-distances}), in the betweenness query model (\cref{cor:betweenness}), as well as in the parallel setting (\cref{cor:parallel}).
These extensions are based on the observation that the set $\hat E$ constructed in \algoname equals the edge set $E$ with high probability (\cref{lem:single-iteration}),\footnote{This property (i.e., $\hat E=E$ with high probability) does not hold on general graphs of bounded degree.}
see \cref{sec:variant}.

\begin{corollary}
\label{cor:all-distances}
Consider a uniformly random  $\Delta$-regular graph with $\Delta=O(1)$.
In the all-distances query model, there is a reconstruction algorithm using $O(\log^2 n)$ queries in expectation.
\end{corollary}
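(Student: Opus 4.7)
The plan is to modify \algoname so that the first loop of \cref{algo:main}, which issues $|S| \cdot n$ distance queries, is replaced by $s = \log^2 n$ all-distances queries, one per vertex $u \in S$. Each such query returns $\delta(u, v)$ for every $v \in V$, which is precisely the information used to build $\hat E$. This collapses the first phase to $\log^2 n$ queries; the real saving in the all-distances model then comes from avoiding the second loop of \algoname altogether.

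To see why the second loop is unnecessary with high probability, observe that $E \subseteq \hat E$ holds unconditionally: any edge $\{a, b\} \in E$ has $\delta(a, b) = 1$, so $|\delta(u, a) - \delta(u, b)| \leq 1$ for every $u$ by the triangle inequality, hence $\{a, b\} \in \hat E$. On a random $\Delta$-regular graph, $|E| = \Delta n / 2$, and by \cref{lem:single-iteration} we have $\hat E = E$ with high probability. Combining these two facts, the algorithm can certify correctness by checking whether $|\hat E| = \Delta n / 2$: if so, since $E \subseteq \hat E$ and the cardinalities agree, one must have $\hat E = E$, and the algorithm outputs $\hat E$ without any further queries.

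To turn this into a Las Vegas algorithm, I would add a fallback triggered when $|\hat E| > \Delta n / 2$: issue at most $n - s$ further all-distances queries, thereby recovering $\delta(a, b)$ for every pair $\{a, b\}$, and return $\{\{a, b\} : \delta(a, b) = 1\}$. The expected number of queries is then
\[
    s \;+\; \Pr\!\big[\hat E \neq E\big] \cdot n,
\]
which is $O(\log^2 n)$ provided \cref{lem:single-iteration} supplies a failure probability of $O(\log^2 n / n)$ or smaller.

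The main obstacle is quantitative: the bound on $\Pr[\hat E \neq E]$ from \cref{lem:single-iteration} must be sharp enough that the expected fallback cost is negligible compared to $\log^2 n$. If the lemma only guarantees a failure probability $o(1)$ but larger than $\log^2 n / n$, I would refine the fallback to exploit the locally tree-like structure of random $\Delta$-regular graphs — identifying the (w.h.p. small) set of vertices whose incident edges are still ambiguous after the first phase, and querying only in their neighbourhoods — so that the product of the fallback cost and its probability remains $O(\log^2 n)$.
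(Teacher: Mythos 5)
Your first phase and your certification step coincide with the paper's: simulate the first loop of \algoname by $s=\log^2 n$ all-distances queries at the sampled vertices, build $\hat E$, and use the fact that $E\subseteq\hat E$ (\cref{fact:hat-E}) together with $|E|=\Delta n/2$ to conclude that $|\hat E|=\Delta n/2$ certifies $\hat E=E$, which holds with probability $1-o(1)$ by \cref{lem:single-iteration}. The genuine gap is in your fallback. Your expected query count is $s+\Pr[\hat E\neq E]\cdot n$, and, as you yourself flag, this is only $O(\log^2 n)$ if $\Pr[\hat E\neq E]=O(\log^2 n/n)$. But \cref{lem:single-iteration} gives only $o(1)$, and the quantitative bound actually extractable from the paper's chain of lemmas is far weaker than what you need: the Structural Lemma (\cref{lem:Fvw}) gives a per-pair failure probability $o(n^{-2})$ (the proof of \cref{lem:dispensable} yields roughly $n^{-17/8}$), so after the union bound over $\Theta(n^2)$ pairs one only gets $\Pr[\hat E\neq E]$ of order about $n^{-1/8}$. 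With that bound your fallback contributes roughly $n^{7/8}$ expected queries, which swamps $\log^2 n$. Your proposed repair (querying only around the ``ambiguous'' vertices) is not developed and it is not clear it can be carried out with the paper's estimates.

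The fix, and the route the paper takes (algorithm \variantname), is to make the failure branch cheap rather than to make the failure probability tiny: if $|\hat E|\neq\Delta n/2$, discard $S$, resample a fresh set of $\log^2 n$ vertices, and repeat the same $\log^2 n$-query phase. Each iteration succeeds with probability $1-o(1)$, so the expected number of iterations is $1+o(1)$ and the expected total is $(1+o(1))\log^2 n=O(\log^2 n)$, using only the $o(1)$ guarantee of \cref{lem:single-iteration} and never paying $\Theta(n)$ queries on failure. So your proposal needs either this resampling fallback or a substantially sharper version of \cref{lem:random} than the paper proves.
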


\begin{corollary}
\label{cor:betweenness}
Consider a uniformly random  $\Delta$-regular graph with $\Delta=O(1)$.
In the betweenness query model, there is a reconstruction algorithm using $\tilde O(n)$ queries in expectation.
\end{corollary}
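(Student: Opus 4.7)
The plan is to adapt \algoname to the betweenness query model by exploiting \cref{lem:single-iteration}: on a random $\Delta$-regular graph with $s = \log^2 n$ pivots, the candidate set $\hat E$ constructed by \algoname already equals the true edge set $E$ with high probability. Consequently, in the betweenness setting, the edge-verification phase of \algoname can be dropped, and it suffices to compute the distances $\delta(u, v)$ for all $(u, v) \in S \times V$ needed to define $\hat E$.

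The high-level algorithm would then proceed as follows. First, sample $S$ of size $s = \log^2 n$ uniformly at random from $V$, exactly as in \algoname. Second, for each source $u \in S$, simulate a BFS from $u$ using $\tilde O(n)$ betweenness queries, thereby obtaining $\delta(u, v)$ for every $v \in V$. Third, compute $\hat E$ from these distances as in \algoname and return it. Summing the per-source cost over the $|S| = \log^2 n$ sources yields the target total of $\tilde O(n)$ betweenness queries.

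I would implement the BFS from $u$ layer by layer. Once the layers $L_0, L_1, \ldots, L_d$ have been identified, each remaining vertex $v \notin L_{\le d}$ is classified by issuing betweenness queries on triples $(u, v, w)$ with carefully chosen witnesses $w$ drawn from the already-known layers, together with a short list of further candidate witnesses. The locally tree-like property of random $\Delta$-regular graphs, combined with the fact that the diameter is $O(\log n)$, ensures that shortest paths from $u$ are unique with high probability and that the BFS has a near-tree structure, so that only $O(\polylog n)$ amortized betweenness queries per vertex are needed.

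The main obstacle is establishing that each single-source BFS can indeed be carried out using $\tilde O(n)$ betweenness queries. A naive implementation, which tests each candidate adjacency directly by scanning for a witness between two vertices, is far too expensive ($\Omega(n)$ queries per pair). The plan is to overcome this with a batching strategy that exploits the tree-like structure to amortize work across vertices within the same layer, in the spirit of the generalization of distance-based reconstruction algorithms to the betweenness model by \cite{abrahamsen2016graph}, but tuned to exploit the stronger structural guarantees available in the random regular setting.
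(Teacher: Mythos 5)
Your overall plan (sample $S$ of size $\log^2 n$, obtain all distances from each $u\in S$ via betweenness queries, and rely on \cref{lem:single-iteration} to read off $E$ from $\hat E$) is the right shape, but the step you yourself flag as ``the main obstacle'' --- that a single-source distance computation can be done with $\tilde O(n)$ betweenness queries --- is exactly the content that is missing, and your sketch of it does not hold up as stated. You lean on the claims that shortest paths from $u$ are unique with high probability and that the BFS is globally near-tree-like; in a random $\Delta$-regular graph the tree-like property is only local (the paper's \cref{lem:dispensable} controls dispensable edges out to radius roughly $\log_{\Delta-1}n$, not uniqueness of shortest paths to all $n$ vertices), and for vertices near the diameter there are typically many shortest paths from $u$, so a witness-based layer-by-layer classification with $O(\polylog n)$ queries per vertex is not justified by anything you prove. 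The paper sidesteps this entirely: it invokes the result of Abrahamsen et al.~\cite{abrahamsen2016graph} (implicit in their Lemma~16) that on \emph{any} graph of bounded degree $\Delta$, the distances from one vertex to all others can be computed with $\tilde O(\Delta^2 n)$ betweenness queries --- no randomness or tree-likeness needed --- and then simply simulates the $O(\log^2 n)$ all-distances queries of \cref{cor:all-distances}, giving $\tilde O(\Delta^2 n\log^2 n)=\tilde O(n)$. If you cite that simulation result (rather than gesturing at it ``in spirit''), your argument closes; if you insist on a bespoke BFS simulation exploiting the random regular structure, you must actually prove the per-vertex query bound, which your proposal does not do.

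A secondary but real issue: you propose to ``drop the edge-verification phase'' and output $\hat E$ directly. \cref{lem:single-iteration} only gives $\hat E=E$ with probability $1-o(1)$, so your algorithm is Monte Carlo, whereas the corollary (as the paper proves it) asserts a genuine reconstruction algorithm whose output always equals $E$, with $\tilde O(n)$ queries in expectation. Directly verifying candidate edges with betweenness queries is too costly (checking whether $\{a,b\}$ is an edge requires ruling out every intermediate witness). The fix is the one used by \variantname: check, at zero query cost, whether $|\hat E|=\Delta n/2$, and resample $S$ and repeat otherwise; by \cref{lem:single-iteration} the cardinality check certifies $\hat E=E$, and the expected number of repetitions is $1+o(1)$, so the expected query count stays $\tilde O(n)$.
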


\begin{corollary}
\label{cor:parallel}
Consider a uniformly random  $\Delta$-regular graph with $\Delta=O(1)$.
In the parallel setting of the distance query model, there is a reconstruction algorithm using $1+o(1)$ rounds and $\tilde O(n)$ queries in expectation.
\end{corollary}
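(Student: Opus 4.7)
The plan is to derive \cref{cor:parallel} as a mild modification of \algoname in which most executions terminate after a single parallel round. Observe first that \algoname is already naturally parallelizable at the level of queries: the $n \cdot s = n\log^2 n$ distance queries of the first loop are mutually independent and can be issued in a single round, and the same is true of the queries in the second loop. This alone gives a two-round algorithm whose expected query complexity is $\tilde O(n)$ by \cref{thm:random}, so the only remaining task is to shave the expected round count down to $1+o(1)$.

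The key observation is that for every edge $\{a,b\}\in E$ and every vertex $u$, the triangle inequality gives $|\delta(u,a)-\delta(u,b)|\leq 1$, so $E\subseteq \hat E$ always holds. Since the hidden graph is $\Delta$-regular and $E\subseteq \hat E$, whenever the graph $(V,\hat E)$ is itself $\Delta$-regular we must have $\hat E = E$. This yields a verification step that costs no additional queries. I therefore propose the following modification of \algoname: after completing the first loop (Round~1), compute $\hat E$ and check whether every vertex has degree exactly $\Delta$ in $(V,\hat E)$; if so, output $\hat E$ and terminate after a single round, otherwise execute the second loop in a second parallel round and return as in \cref{algo:main}.

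By \cref{lem:single-iteration}, $\hat E = E$ holds with probability $1-o(1)$; in that event $(V,\hat E)$ is $\Delta$-regular, the check succeeds, and the modified algorithm uses $1$ round, while in the complementary $o(1)$-probability event it uses at most $2$ rounds. The expected number of rounds is therefore $1\cdot(1-o(1)) + 2\cdot o(1) = 1+o(1)$. Since the modified algorithm issues a subset of the queries of \algoname, its expected query complexity is still $\tilde O(n)$ by \cref{thm:random}. The only delicate point to keep in mind is that the regularity test must truly be a zero-query operation, which is immediate since both assembling $\hat E$ from the Round~1 distances and summing the degrees are purely internal computations; the main technical input that does the real work is \cref{lem:single-iteration}, which tells us that the bad event in which a second round is needed is rare enough to be absorbed into the $o(1)$ term.
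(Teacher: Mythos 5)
Your proposal is correct, and it rests on the same key ingredient as the paper, namely \cref{lem:single-iteration}: since $E\subseteq\hat E$ always (\cref{fact:hat-E}), the event that $(V,\hat E)$ is $\Delta$-regular (equivalently, $|\hat E|=\Delta n/2$) is a zero-query certificate that $\hat E=E$, and this event has probability $1-o(1)$. Where you diverge is in the failure branch: the paper's \variantname resamples $S$ and repeats the first-stage queries until $|\hat E|=\Delta n/2$, so it only ever issues vertex-to-all queries and may in principle loop more than twice, whereas you fall back to the second stage of \algoname (querying every pair in $\hat E$) in one extra round. Your variant buys a deterministic worst case of $2$ rounds and unconditional correctness and termination on any graph (the fallback is just \algoname, which is always correct by \cref{lem:query}), at the cost of occasionally issuing pairwise queries; the paper's variant buys the property that all queries are of the ``one source to all vertices'' form, which is what makes it transfer directly to the all-distances model in \cref{cor:all-distances}. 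Both give expected $1+o(1)$ rounds and $\tilde O(n)$ queries; your bound on the expected query count via a coupling with \algoname and \cref{thm:random} is valid, since with the same sample $S$ your queries are a subset of \algoname's.
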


We further extend the analysis of \algoname to study the metric dimension of random $\Delta$-regular graphs (\cref{cor:metric_dimention}), by showing (in \cref{lem:landmarks}) that a random subset of $\log^2 n$ vertices is almost surely a \emph{resolving set} (\cref{def:resolving}) for those graphs, see \cref{sec:metric_dimension}.

\begin{corollary}
\label{cor:metric_dimention}
Consider a uniformly random  $\Delta$-regular graph with $\Delta=O(1)$.
With probability $1-o(1)$, the metric dimension of the graph is at most $\log^2 n$.
\end{corollary}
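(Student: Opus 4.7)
The plan is to reduce the corollary directly to \cref{lem:landmarks}, which asserts that a uniformly random set $S \subseteq V$ of cardinality $\log^2 n$ is a resolving set of the graph with probability $1-o(1)$. Since the metric dimension of a graph is by definition the minimum cardinality of any resolving set, on the event that $S$ is resolving we immediately get metric dimension $\leq |S| = \log^2 n$. So the whole substance of the corollary is contained in \cref{lem:landmarks}, and I will describe how I would prove that lemma.

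Recall that $S$ fails to be a resolving set iff there exist two distinct vertices $a\neq b$ with $\delta(u,a)=\delta(u,b)$ for every $u\in S$. I would split pairs $\{a,b\}$ into non-edges and edges and handle them separately.

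For non-edge pairs, the analysis already developed for \algoname delivers the needed conclusion for free. The set $\hat E$ constructed by \algoname from the same sample $S$ is, by \cref{lem:single-iteration}, equal to $E$ with high probability. In particular, on that high-probability event, for every non-edge $\{a,b\}$ there exists $u\in S$ with $|\delta(u,a)-\delta(u,b)|\geq 2$, so \emph{a fortiori} $\delta(u,a)\neq\delta(u,b)$, and the non-edge pair is resolved by $S$.

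The edge pairs require a genuinely separate argument, since the triangle inequality forces $|\delta(u,a)-\delta(u,b)|\leq 1$ whenever $\{a,b\}\in E$, so edges are always contained in $\hat E$ and the preceding argument says nothing. The key claim I would establish is that, with probability $1-o(1)$ over the random graph, for every edge $\{a,b\}$ the fraction of vertices $u\in V$ with $\delta(u,a)=\delta(u,b)$ is at most some constant $\gamma<1$. This is where the locally tree-like property used throughout the paper becomes essential, and it is the main obstacle: in a perfect $\Delta$-regular tree containing the edge $\{a,b\}$, the unique path from any vertex $u$ to $a$ either goes through $b$ (yielding $\delta(u,b)=\delta(u,a)-1$) or does not (yielding $\delta(u,b)=\delta(u,a)+1$), so no vertex is equidistant from $a$ and $b$; the only vertices that can be equidistant in the actual random $\Delta$-regular graph therefore sit in neighborhoods of short cycles, and standard estimates on random $\Delta$-regular graphs (already used to prove \cref{thm:random}) bound the number of such vertices by $o(n)$, giving $\gamma=o(1)$. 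Granting this, for any fixed edge $\{a,b\}$ the probability that all $s=\log^2 n$ independent samples of $S$ are equidistant from $a,b$ is at most $\gamma^{s}=n^{-\omega(1)}$, and a union bound over the $O(n)$ edges still yields failure probability $o(1)$. Combining the two cases establishes \cref{lem:landmarks} and hence the corollary.
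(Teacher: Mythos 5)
Your reduction to \cref{lem:landmarks} and your treatment of non-adjacent pairs coincide with the paper's proof (both rest on \cref{lem:single-iteration}, i.e., $\hat E=E$ with high probability). The gap is in the edge case, which is indeed the crux. You justify your key claim by asserting that a vertex equidistant from the endpoints of an edge must sit near a short cycle, so that the locally tree-like estimates behind \cref{thm:random} bound the number of such vertices by $o(n)$. That reasoning fails: if $\delta(u,a)=\delta(u,b)$ for an edge $(a,b)$, the witnessing odd closed walk through $(a,b)$ has length $2\delta(u,a)+1\approx 2\log_{\Delta-1}n$ for a typical $u$, and cycles of logarithmic length are abundant in random $\Delta$-regular graphs. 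The tree-like control in the paper (\cref{lem:dispensable}) only governs the ball of radius roughly $\log_{\Delta-1}(3n/\log n)$ around the pair, which contains $O(n/\log n)$ vertices; it says nothing about the $(1-o(1))n$ vertices beyond that radius, and it is exactly there that equidistance occurs: once the BFS layers around $a$ and $b$ saturate, same-level collisions between the two sides occur at a linear rate, and any vertex all of whose neighbors on shortest paths toward $\{a,b\}$ are equidistant is itself equidistant, so one expects the equidistant set of a fixed edge to have size $\Theta(n)$. Hence $\gamma=o(1)$ is false, and even the weaker uniform bound $\gamma\le\gamma_0<1$ for a constant $\gamma_0$ (which would indeed suffice for your union bound) is a nontrivial assertion needing per-edge failure probability $o(1/n)$; it does not follow from any estimate proved in the paper, so as written the edge case is unsupported.

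For comparison, the paper sidesteps equidistant-fraction estimates entirely: \cref{lem:metric-dimension} shows that with high probability every edge $(a,b)$ admits a vertex $c$ adjacent to $b$ but not to $a$ (via the classical fact that constant-size subgraphs with more edges than vertices are unlikely); since $\delta(a,c)\ge 2$ and $\hat E=E$, some $u\in S$ has $|\delta(u,a)-\delta(u,c)|\ge 2$, and the triangle inequality forces $|\delta(u,a)-\delta(u,b)|\ge 1$. If you want to keep a direct argument for edges, note that with $s=\log^2 n$ a constant fraction is not needed: $\Omega(n/\log n)$ vertices $u$ with $\delta(u,a)\neq\delta(u,b)$ per edge already gives per-edge failure probability $(1-3/\log n)^{\log^2 n}=n^{-3}$, which survives the union bound over $\Delta n/2$ edges. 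Such a bound could plausibly be obtained by rerunning the interesting-vertices machinery (\cref{lem:interesting-distinguish,lem:number-interesting}) for adjacent pairs, where essentially the same argument shows that an $a$-interesting vertex $x$ satisfies $\delta(b,x)\neq\delta(a,x)$ (a gap of $2$ is impossible for adjacent pairs, but a gap of $1$ suffices here), with minor adjustments to the counting since the edge $(a,b)$ itself is dispensable; but that is a different argument from the one you propose.
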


With extra work, the parameter $s=\log^2 n$  in \cref{thm:random} can be reduced to $\log n\cdot (\log \log n)^{2+\epsilon}$, for any $\epsilon>0$, see \apref{remark:extra-work}.
As a consequence, the query complexity in the all-distances query model (\cref{cor:all-distances}) and the upper bound on the metric dimension (\cref{cor:metric_dimention}) can both be improved to $O(\log n\cdot (\log \log n)^{2+\epsilon})$.

\subsubsection{Bounded-Degree Graphs}
On general graphs of bounded degree, \algoname (\cref{algo:main}) has subquadratic query complexity and is highly parallelizable (\cref{thm:general}), see \cref{sec:general}.

\begin{theorem}
\label{thm:general}
Consider a general  graph of bounded degree $\Delta=O(\polylog n)$.
Let $s=n^{2/3}$.
In the distance query model, \algoname (\cref{algo:main}) is a reconstruction algorithm using $\tilde O(n^{5/3})$ queries in expectation.
In addition,  \algoname can be parallelized using 2 rounds.
\end{theorem}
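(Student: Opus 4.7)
The plan is to check correctness, reduce the query bound to controlling $\mathbb{E}[|\hat E|]$, estimate that expectation through a structural observation about distinguishers, and verify the $2$-round parallelism claim. Correctness is immediate from the triangle inequality: every edge $\{a,b\}\in E$ satisfies $|\delta(u,a)-\delta(u,b)|\leq\delta(a,b)=1$ for every $u$, so $E\subseteq\hat E$, and the filter in \cref{line:return} returns exactly the pairs in $\hat E$ at distance $1$, which equals $E$. The first loop costs $sn=n^{5/3}$ queries and the second costs $|\hat E|$, so it suffices to prove $\mathbb{E}[|\hat E|]=\tilde O(n^{5/3})$.

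Edges contribute $|E|\leq n\Delta/2=\tilde O(n)$. For any non-edge $\{a,b\}$, independence of the samples in $S$ gives $\mathbb{P}[\{a,b\}\in\hat E]=(|W_{a,b}|/n)^{s}$ with $W_{a,b}=\{u:|\delta(u,a)-\delta(u,b)|\leq 1\}$, so the target reduces to showing that $\sum_{\{a,b\}\notin E}(|W_{a,b}|/n)^{s}=\tilde O(n^{5/3})$. I would exploit the following structural observation: if $\delta(a,b)=d\geq 2$ and $\delta(u,a)\leq\lfloor(d-2)/2\rfloor$, then $\delta(u,b)\geq d-\delta(u,a)\geq\delta(u,a)+2$, so $u$ distinguishes $\{a,b\}$; symmetrically for $b$, and the two balls are disjoint by the triangle inequality, giving $|V\setminus W_{a,b}|\geq|B_{\lfloor(d-2)/2\rfloor}(a)|+|B_{\lfloor(d-2)/2\rfloor}(b)|$. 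I then plan to partition non-edges by $d$: for $d$ up to a threshold $D_0$, bound the number of pairs at distance $d$ by at most $n\Delta^{d}/2$ and use the trivial $\mathbb{P}\leq 1$; for $d>D_0$, use $(|W_{a,b}|/n)^{s}\leq\exp(-|V\setminus W_{a,b}|\,s/n)$. With $s=n^{2/3}$ and $\Delta=O(\polylog n)$, a suitable choice of $D_0$ should balance the two regimes at $\tilde O(n^{5/3})$.

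Parallelism is direct: every first-loop query has the form $\query(u,v)$ with $S$ fixed before any query is posed, so the first loop is non-adaptive and runs in one round; once the first batch of answers arrives, $\hat E$ is determined and the second loop is likewise non-adaptive, giving exactly $2$ rounds. The hard part will be the summation step: the naive lower bound $|B_{r}(v)|\geq r+1$ (from connectedness alone) can be too weak on graphs with slow ball growth, so balancing the polynomial count of close pairs against the exponential tail for far pairs will require care, likely leveraging $\Delta=O(\polylog n)$ together with a more global combinatorial argument that rules out too many near-twin pairs in bounded-degree graphs.
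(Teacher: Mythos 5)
Your correctness argument, the reduction of the query count to $n\cdot s+|\hat E|$, and the $2$-round parallelization are all fine and match the paper (\cref{fact:hat-E}, \cref{lem:query}). The genuine gap is the core counting step, which you leave open. Your plan is to lower-bound the number of distinguishers of a pair at distance $d$ by the sizes of the balls $B_{\lfloor (d-2)/2\rfloor}(a)$ and $B_{\lfloor (d-2)/2\rfloor}(b)$ and then balance a $\Delta^{d}$-type count of close pairs against an $\exp(-|V\setminus W_{a,b}|\,s/n)$ tail for far pairs. In a worst-case bounded-degree graph, balls can grow only linearly (paths, trees with long branches, grids), so this guarantees merely $\Omega(d)$ distinguishers; with $s=n^{2/3}$ the tail $\exp(-c\,d\,n^{-1/3})$ only becomes useful once $d\gtrsim n^{1/3}\log n$, and for the complementary regime $d\le D_0$ the number of pairs is not $n\Delta^{d}/2$ in any useful sense (it is simply $\Theta(n^2)$ once $\Delta^{D_0}\ge n$), so the trivial probability bound gives $\Theta(n^{2})$, not $\tilde O(n^{5/3})$. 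You flag this yourself ("a more global combinatorial argument that rules out too many near-twin pairs"), but that argument is exactly the missing content, and it is where all the work in the paper lies.

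The paper's route is: by \cref{lem:B}, only pairs in the deterministic set $B$ (distance $\ge 2$ and at most $3n\log n/s$ distinguishers) contribute more than $o(1)$ in expectation, so it suffices to bound $|B|$; then \cref{lem:Ba} shows $|B(a)|\le 9\Delta^{3} n^{2}\log^{2}n/s^{2}$ for every vertex $a$. The key trick is self-referential: fix a shortest-path tree rooted at $a$ and classify each $b\in B(a)$ by the midpoint $m$ of its tree path. For a fixed $m$ and a fixed child $\hat m$ of $m$ on these paths, any two members $b_0,b_1$ of $B(a,m)$ routed through $\hat m$ satisfy $\delta(b_1,b_0)\le \delta(b_1,a)-2$, i.e.\ $b_1$ distinguishes $\{a,b_0\}$; since $b_0\in B$ has few distinguishers, this caps $|B(a,m)|$ at $O(\Delta\, n\log n/s)$ (\cref{lem:simple-1}). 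Similarly, the second-to-last vertices on the paths to distinct midpoints all distinguish $a$ from the farthest element $b^{*}\in B(a)$, capping the number of midpoints at $O(\Delta\, n\log n/s)$ (\cref{lem:simple-2}). Multiplying and summing over $a$ gives $|B|=\tilde O(n^{3}/s^{2})=\tilde O(n^{5/3})$. Note this uses no ball-growth assumption at all, which is precisely why it survives worst-case bounded-degree instances where your approach stalls; to complete your proof you would need to supply an argument of this kind (or an equivalent bound on the number of pairs with few distinguishers).
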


We note that the \valgoname algorithm\footnote{Algorithm~3 in \cite{kannan2018graph}.} from \cite{kannan2018graph} also reconstructs graphs of bounded degree in the distance query model.
How does \algoname compare to \valgoname?
In terms of query complexity, on general graphs of bounded degree, \algoname uses $\tilde O(n^{5/3})$ queries, so is not as good as \valgoname using $\tilde O(n^{3/2})$ queries; on random $\Delta$-regular graphs, \algoname is more efficient than \valgoname: $\tilde O(n)$ versus $\tilde O(n^{3/2})$.
In terms of round complexity, \algoname can be parallelized using 2 rounds on general graphs of bounded degree, and even $1+o(1)$ rounds on random  $\Delta$-regular graphs; while \valgoname requires up to $3\log n$ rounds due to a multi-phase selection process for centers.\footnote{The number of rounds in \valgoname is implicit in the proof of Lemma~2.3 from \cite{kannan2018graph}.}
In terms of structure, \algoname is much simpler than \valgoname, which is based on multi-phase Voronoi-cell decomposition.

In worst case instances of graphs of bounded degree, the query complexity of \algoname is higher than linear.
For example, when the graph is a complete binary tree, \algoname would require $\Omega(n\sqrt{n})$ queries (the complexity of \algoname is minimized when $s$ is roughly $\sqrt{n}$).
Thus the open question from \cite{kannan2018graph} of whether general graphs of bounded degree can be reconstructed using $\tilde O(n)$ distance queries remains open and answering it positively would require further algorithmic ideas.

\section{Notations and Preliminary Analysis}
\label{sec:preliminary}
Let $G=(V,E)$ be a connected, undirected, and unweighted graph, where $V$ is the set of vertices such that $|V|=n$ and $E$ is the set of edges.
We say that $\{a,b\}\subseteq V$ is a \emph{vertex pair} if both $a$ and $b$ belong to $V$ such that $a\neq b$.
The \emph{distance} between a vertex pair $\{a,b\}\subseteq V$, denoted by $\delta(a,b)$, is the number of edges on a shortest $a$-to-$b$ path.

\begin{definition}[Distinguishing]
\label{def:distinguish}
For a vertex pair $\{a,b\}\subseteq V$, we say that a vertex $u\in V$ \emph{distinguishes} $a$~and~$b$, or equivalently that $u$ is a \emph{distinguisher} of $\{a,b\}$, if $|\delta(u,a)-\delta(u,b)|>1$.
Let $D(a,b)\subseteq V$ denote  the set of vertices $u\in V$ distinguishing $a$ and $b$.
\end{definition}

Let $s\in[1,n]$ be an integer parameter.
The set $S$ constructed in \algoname consists of $s$ vertices selected uniformly and independently at random from $V$.

The set $\hat E$ constructed in \algoname consists of the vertex pairs $\{a,b\}\subseteq V$ such that $a$ and $b$ are not distinguished by any vertex in $S$, i.e., $D(a,b)\cap S= \emptyset$, or equivalently, $|\delta(u,a)-\delta(u,b)|\leq 1$ for all $u\in S$.
For any edge $(a,b)\in E$, it is easy to see that $|\delta(u,a)-\delta(u,b)|\leq 1$ for all $u\in V$, which implies that $\{a,b\}\in \hat E$.
Hence the following inclusion property.

\begin{fact}
\label{fact:hat-E}
$E\subseteq \hat E$.
\end{fact}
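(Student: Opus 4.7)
The plan is to observe that this is an immediate consequence of the triangle inequality applied to an edge. Concretely, fix any edge $\{a,b\}\in E$, so $\delta(a,b)=1$. For an arbitrary vertex $u\in V$, the triangle inequality gives
\[
\delta(u,a)\le \delta(u,b)+\delta(b,a)=\delta(u,b)+1,
\]
and the symmetric inequality with the roles of $a$ and $b$ swapped yields $\delta(u,b)\le \delta(u,a)+1$. Together these two bounds give $|\delta(u,a)-\delta(u,b)|\le 1$.

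Since this bound holds for every $u\in V$, it holds in particular for every $u\in S$, so by the definition of $\hat E$ given in \algoname (equivalently, by \cref{def:distinguish}, no vertex of $S$ is a distinguisher of $\{a,b\}$, i.e., $D(a,b)\cap S=\emptyset$), the pair $\{a,b\}$ belongs to $\hat E$. Thus $E\subseteq \hat E$.

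There is no real obstacle here: the statement is a one-line triangle-inequality argument, and the only thing to be careful about is to note that the bound $|\delta(u,a)-\delta(u,b)|\le 1$ holds for \emph{every} vertex $u$ (not just those in $S$), which is exactly what the paragraph preceding the fact already makes explicit.
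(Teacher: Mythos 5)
Your proof is correct and matches the paper's argument: the paper establishes the fact by the same observation that $|\delta(u,a)-\delta(u,b)|\leq 1$ holds for every vertex $u\in V$ whenever $(a,b)\in E$, hence for every $u\in S$, so $\{a,b\}\in\hat E$. Your version merely spells out the triangle-inequality step that the paper leaves implicit.
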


We show that \algoname is correct and we give a preliminary analysis on its query complexity as well as on its round complexity, in \cref{lem:query}.

\begin{lemma}
\label{lem:query}
The output of \algoname (\cref{algo:main}) equals the edge set $E$.
The number of distance queries in \algoname is $n\cdot s+|\hat E|$.
In addition, \algoname can be parallelized using 2 rounds.
\end{lemma}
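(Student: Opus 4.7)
The plan is to verify each of the three claims in turn, all essentially by inspection of \cref{algo:main} combined with \cref{fact:hat-E}.

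\textbf{Correctness.} For the first claim, I would argue two inclusions between the output set and $E$. On the one hand, any pair returned on line~\ref{line:return} satisfies $\delta(a,b)=1$, which by definition means $\{a,b\}\in E$. On the other hand, for any edge $\{a,b\}\in E$, \cref{fact:hat-E} gives $\{a,b\}\in \hat E$, so this pair is considered in the final loop and its true distance $1$ is revealed by $\query(a,b)$; hence it is included in the output. Combining the two inclusions yields that the output equals $E$.

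\textbf{Query complexity.} For the second claim, I would simply count. The first \textbf{for} loop performs exactly one distance query per pair $(u,v)\in S\times V$, contributing $|S|\cdot |V| = s\cdot n$ queries. The computation of $\hat E$ uses only the already-queried distances and performs no new queries. The second \textbf{for} loop performs exactly one query per pair in $\hat E$, contributing $|\hat E|$ queries, for a total of $n\cdot s + |\hat E|$.

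\textbf{Round complexity.} For the third claim, I would observe that all the $s\cdot n$ queries of the first loop are independent of each other and can be issued in parallel in a single round; given their answers, the set $\hat E$ is determined without further queries. The $|\hat E|$ queries of the second loop are again mutually independent and can be issued in parallel in a second round. Thus two rounds suffice.

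No step here requires any substantial technical work; the lemma is essentially a bookkeeping statement, and the only place one must be a little careful is in invoking \cref{fact:hat-E} to ensure that restricting the final queries to the set $\hat E$ does not miss any true edge.
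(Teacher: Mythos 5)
Your proposal is correct and follows essentially the same route as the paper's proof: correctness via the two inclusions using \cref{fact:hat-E}, the query count $n\cdot s+|\hat E|$ by direct counting of the two loops, and parallelization by issuing each loop's queries in one round. No gaps; your version just spells out the two inclusions slightly more explicitly than the paper does.
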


\begin{proof}
The output of \algoname consists of the vertex pairs $\{a,b\}\in \hat E$ such that $\{a,b\}$ is an edge in $E$.
Since $E\subseteq \hat E$ (\cref{fact:hat-E}), the output of \algoname equals the edge set $E$.

Observe that the distance queries in \algoname are performed in two stages.
The number of distance queries in the first stage is $|V|\cdot |S|=n\cdot s$.
The number of distance queries in the second stage is $|\hat E|$.
Thus the query complexity of \algoname is $n\cdot s+|\hat E|$.
The distance queries in each of the two stages can be performed in parallel, so \algoname can be parallelized using 2 rounds.
\end{proof}

From \cref{lem:query}, in order to further study the query complexity of \algoname, it suffices to analyze $|\hat E|$, which equals $|E|+|\hat E\setminus E|$ according to \cref{fact:hat-E}.
Since $|E|\leq \Delta n$ in a graph of bounded degree $\Delta$, our focus in the subsequent analysis is $|\hat E\setminus E|$.

\begin{lemma}
\label{lem:B}
Let $s=\omega(\log n)$ be an integer parameter.
Let $B$ be the set of vertex pairs $\{a,b\}\subseteq V$ such that $\delta(a,b)\geq 2$ and $|D(a,b)|\leq 3n\cdot (\log n)/s$.
We have $\expect{S}{|\hat E\setminus E|}\leq |B|+o(1)$.
\end{lemma}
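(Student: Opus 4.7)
The plan is to decompose $\hat E \setminus E$ according to the size of $D(a,b)$ and use a union bound over ``well-distinguished'' pairs. By definition of $\hat E$ and since $E \subseteq \hat E$ (\cref{fact:hat-E}), a vertex pair $\{a,b\}$ belongs to $\hat E \setminus E$ if and only if $\delta(a,b) \geq 2$ and $S \cap D(a,b) = \emptyset$. Therefore, by linearity of expectation,
\[
\expect{S}{|\hat E \setminus E|} \;=\; \sum_{\{a,b\} : \delta(a,b)\geq 2} \proba{S}{S \cap D(a,b) = \emptyset}.
\]

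Next, I would split this sum into two parts according to the threshold $3n(\log n)/s$. The pairs with $|D(a,b)| \leq 3n(\log n)/s$ are exactly the pairs in $B$ (among those with $\delta(a,b) \geq 2$), and each contributes a probability bounded by $1$, so their total contribution is at most $|B|$.

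For the remaining pairs, where $|D(a,b)| > 3n(\log n)/s$, I would use the fact that the $s$ vertices of $S$ are sampled independently and uniformly from $V$ to obtain
\[
\proba{S}{S \cap D(a,b) = \emptyset} \;=\; \left(1 - \frac{|D(a,b)|}{n}\right)^{s} \;\leq\; \exp\!\left(-\frac{|D(a,b)|\,s}{n}\right) \;<\; \exp(-3 \log n) \;=\; n^{-3}.
\]
Summing this bound over at most $\binom{n}{2} < n^2$ such pairs yields a total contribution of at most $n^{-1} = o(1)$.

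Combining the two parts gives $\expect{S}{|\hat E\setminus E|} \leq |B| + o(1)$, as desired. The argument is essentially a textbook second-moment/union-bound calculation, so no real obstacle is expected; the only care required is verifying that the inequality $(1-x)^s \leq e^{-xs}$ is applied with $x = |D(a,b)|/n \in [0,1]$, and that the total number of candidate pairs is $O(n^2)$ so that the tail contribution is dominated by $o(1)$. The hypothesis $s = \omega(\log n)$ plays no role in the bound itself (my computation works for any $s$) but ensures that $|B|$ is a meaningful quantity, since otherwise the threshold $3n(\log n)/s$ would exceed $n$ and $B$ would trivially include every non-edge pair.
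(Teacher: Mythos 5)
Your proof is correct and follows essentially the same route as the paper: split the pairs according to membership in $B$, bound each $B$-pair's contribution trivially by $1$, and for pairs with $|D(a,b)|>3n(\log n)/s$ use the independence of the $s$ uniform samples to get a per-pair probability of at most $n^{-3}$ (the paper gets $o(n^{-2})$), then union-bound over the $O(n^2)$ pairs. The only cosmetic difference is that you write the expectation as an exact sum via the iff characterization of $\hat E\setminus E$, while the paper bounds $|Z|\leq|B|+|Z\setminus B|$ and estimates $\expect{S}{|Z\setminus B|}$; your side remark about the role of $s=\omega(\log n)$ is also accurate.
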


\begin{proof}
Denote $Z$ as the set $\hat E\setminus E$.
Observe that $|Z|\leq |B|+|Z\setminus B|$.
Since $B$ is independent of the random set $S$, we have $\expect{S}{|Z|}\leq |B|+\expect{S}{|Z\setminus B|}$.
It suffices to show that $\expect{S}{|Z\setminus B|}=o(1)$.

We claim that for any vertex pair $\{a,b\}\subseteq V$ such that $\{a,b\}\notin B$, the probability that $\{a,b\}\in Z$ is $o(n^{-2})$.
To see this, fix a vertex pair $\{a,b\}\notin B$.
By definition of $B$, either $\delta(a,b)=1$, or $|D(a,b)|>3n\cdot (\log n)/s$.
In the first case, $\{a,b\}\notin Z$ since $Z$ does not contain any edge of $E$.
In the second case, the event $\{a,b\}\in Z$ would imply that $\{a,b\}\in \hat E$, hence $D(a,b)\cap S= \emptyset$. Therefore,
\begin{align*}
&\proba{S}{\{a,b\}\in Z\mid\{a,b\}\notin B}\\
\leq & \proba{S}{D(a,b)\cap S= \emptyset\mid\{a,b\}\notin B}\\
< &\left(1-\frac{3n\cdot (\log n)/s}{n}\right)^s\\
=& o(n^{-2}),
\end{align*}
where the second inequality follows since $|D(a,b)|>3n\cdot (\log n)/s$ and the set $S$ consists of $s$ vertices selected uniformly and independently at random, and the last step follows since $s=\omega(\log n)$.

There are at most $n(n-1)/2$ vertex pairs $\{a,b\}\notin B$.
By the linearity of expectation, the expected number of vertex pairs $\{a,b\}\notin B$ such that $\{a,b\}\in Z$ is at most $o(n^{-2})\cdot n(n-1)/2=o(1)$, so $\expect{S}{|Z\setminus B|}=o(1)$.
Therefore, $\expect{S}{|Z|}\leq |B|+\expect{S}{|Z\setminus B|}= |B|+o(1)$.
\end{proof} 
\section{Reconstruction of Random Regular Graphs (Proof of \cref{thm:random})}
\label{sec:random}
In this section, we analyze \algoname (\cref{algo:main}) on random  $\Delta$-regular graphs in the distance query model.
We assume that $\Delta\geq 2$ and that $\Delta n$ is even since otherwise those graphs do not exist.

We bound the expectation of $|\hat E\setminus E|$ on random  $\Delta$-regular graphs, in \cref{lem:random}.

\begin{lemma}
\label{lem:random}
Let $G$ be a uniformly random $\Delta$-regular graph with $\Delta=O(1)$.
Let $s=\log^2 n$.
Let $S\subseteq V$ be a set of $s$ vertices selected uniformly and independently at random from $V$.
We have $\expect{G,S}{|\hat E\setminus E|}=o(1)$.
\end{lemma}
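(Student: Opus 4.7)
The plan is to combine the structural reduction from \cref{lem:B} with a local analysis exploiting the fact that random $\Delta$-regular graphs are locally tree-like. With $s=\log^2 n$, \cref{lem:B} gives $\expect{S}{|\hat E\setminus E|}\leq |B|+o(1)$, where $B$ consists of non-edge pairs $\{a,b\}$ with at most $3n/\log n$ distinguishers. Taking the expectation over $G$ as well and using the fact that the uniform distribution on random $\Delta$-regular graphs is invariant under relabelings of $V$, $\expect{G}{|B|}=\binom{n}{2}\cdot\proba{G}{\{a_0,b_0\}\in B}$ for an arbitrary fixed pair $\{a_0,b_0\}\subseteq V$, so the goal reduces to proving
\[
\proba{G}{\{a_0,b_0\}\in B}=o(n^{-2}).
\]

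I would attack this by revealing $G$ via simultaneous BFS explorations from $a_0$ and $b_0$ in the configuration model. For a radius $r$ chosen as a small constant multiple of $\log_{\Delta-1} n$, the two balls $B(a_0,r)$ and $B(b_0,r)$ are, with high probability, disjoint trees of size $(\Delta-1)^{r}(1+o(1))$; on this event, every vertex $u$ outside them has $\delta(u,a_0)=\delta(u,v_a)+\delta(v_a,a_0)$ and $\delta(u,b_0)=\delta(u,v_b)+\delta(v_b,b_0)$ for uniquely defined boundary witnesses $v_a,v_b$. The condition $|\delta(u,a_0)-\delta(u,b_0)|\leq 1$ then collapses to $\delta(u,v_a)-\delta(u,v_b)$ lying in a window of width $2$ around the offset $\delta(v_b,b_0)-\delta(v_a,a_0)$.

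Since the values $\delta(v_a,a_0)$ range over roughly $r=\Theta(\log n)$ integers and are spread across the $(\Delta-1)^r$ leaves of the first BFS tree (and symmetrically for the second), only an $O(1/\log n)$ fraction of boundary-witness pairs have an offset compatible with the critical window. A first-moment computation on the configuration model should then show that the expected number of non-distinguishers outside the two balls is $n\cdot o(1)$, and in particular $|D(a_0,b_0)|\geq n-O(n/\log n)>3n/\log n$ with sufficiently high probability.

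The main obstacle is quantitative: pushing the failure probability all the way to $o(n^{-2})$ rather than the usual $o(1)$ or $1-n^{-\Omega(1)}$ delivered by a naive tree-likeness event. Since a straightforward union bound over the $\binom{n}{2}$ pairs would blow up any $n^{-\Omega(1)}$ error, one has to sharpen the per-pair estimate. This is precisely where I expect the notion of \emph{interesting vertices} (\cref{def:interesting}) to enter: by isolating a robust subset of vertices whose contribution to $D(a_0,b_0)$ can be certified from purely local tree-structure, and controlling the rare configurations in which this subset is too small via a switching-method or second-moment argument on the configuration model, one should be able to bound the tail at the $o(n^{-2})$ level required to complete the proof.
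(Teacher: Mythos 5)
Your reduction (apply \cref{lem:B}, pass to the configuration model, and aim for a per-pair bound $\proba{G}{\{a_0,b_0\}\in B}=o(n^{-2})$ followed by a union bound over $\binom{n}{2}$ pairs) is exactly the frame of the paper's proof, but the core of the argument -- actually establishing the per-pair tail bound at the $o(n^{-2})$ level -- is missing, and the route you sketch for it does not work. First, conditioning on the event that the two balls of radius $r=\Theta(\log_{\Delta-1}n)$ are disjoint trees is hopeless for this purpose: a ball of radius $r$ contains $\Theta((\Delta-1)^r)=n^{\Theta(1)}$ vertices, so the probability that it contains \emph{no} cycle is only $1-\Omega((\Delta-1)^{2r}/n)$, which can never be $1-o(n^{-2})$ for any radius large enough to be useful; you acknowledge this obstacle but resolve it only by the hope that ``interesting vertices'' plus an unspecified switching or second-moment argument will fix it, which is precisely the content that needs to be supplied. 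Second, the heuristic inside the tree-ball event is internally inconsistent: if the ball of radius $r$ around $a_0$ is a full tree, every shortest path from an outside vertex enters through the sphere at distance exactly $r$, so the quantities $\delta(v_a,a_0)$ do not ``range over roughly $r$ integers'' -- they all equal $r$ -- and the claimed $O(1/\log n)$ fraction of compatible offsets has no basis. Third, the target you set, $|D(a_0,b_0)|\geq n-O(n/\log n)$, is far stronger than what \cref{lem:B} requires (only $|D(a_0,b_0)|>3n/\log n$) and is almost certainly false for typical pairs: distances in a random $\Delta$-regular graph concentrate on $O(1)$ many top values, so a constant fraction of vertices $u$ satisfy $|\delta(u,a_0)-\delta(u,b_0)|\leq 1$.

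The paper's proof fills this gap with two specific devices that your proposal does not contain. It never insists that neighborhoods be trees; instead, \cref{lem:dispensable} sets up a multi-scale scheme with $M=\lceil\log\log n\rceil$ radii $L_1<\dots<L_M$ and tolerances $g_1<\dots<g_M$ (allowing polynomially many ``dispensable'' edges at the larger radii), which is what makes a failure probability of $o(n^{-2})$ attainable all the way out to radius $\lceil\log_{\Delta-1}(3n/\log n)\rceil+2$. It then certifies distinguishers combinatorially: a vertex is \emph{interesting} (\cref{def:interesting}) if every vertex on a shortest path back to $v$ (or $w$) touches only indispensable edges; \cref{lem:interesting-distinguish} shows via the predecessor structure of the exploration that every interesting vertex at finite distance distinguishes $v$ and $w$, and \cref{lem:number-interesting} shows that, even after discarding the vertices ruined by the allowed dispensable edges, at least $(\Delta-2-o(1))(\Delta-1)^{k-1}\approx 3n/\log n$ interesting vertices survive at the final radius. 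Without an argument of this kind -- both the relaxed, scale-dependent control of cycles and the local certification that enough vertices distinguish the pair -- your proposal does not yield \cref{lem:Fvw}, and hence does not prove the stated lemma.
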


\begin{proof}[Proof of \cref{thm:random} using \cref{lem:random}]
By \cref{lem:query}, \algoname is a reconstruction algorithm using $n\cdot s+|\hat E|=n\cdot\log^2 n+|\hat E|$ distance queries.
From \cref{fact:hat-E}, $|\hat E|=|E|+|\hat E\setminus E|$.
Since $G$ is $\Delta$-regular, $|E|=\Delta n/2$.
By \cref{lem:random}, $\expect{G,S}{|\hat E\setminus E|}=o(1)$.
Therefore, the expected number of distance queries in \algoname is $n\cdot\log^2 n+\Delta n/2+o(1)$, which is $\tilde O(n)$ since $\Delta=O(1)$.
\end{proof}

It remains to prove \cref{lem:random} in the rest of this section.

\subsection{Configuration Model and the Structural Lemma}
We consider  a random $\Delta$-regular graph generated according to the \emph{configuration model}~\cite{bollobas1980probabilistic,wormald1999models}.
Given a partition of a set of $\Delta n$ points  into $n$ cells $v_1, v_2, \dots, v_n$ of $\Delta$ points, a \emph{configuration} is a perfect matching of the points into $\Delta n/2$ pairs. It corresponds to a (not necessarily connected) \emph{multigraph}  $G'$  in which the cells are regarded as vertices and the pairs as edges: a pair of points $\{x, y\}$ in the configuration corresponds to an edge $(v_i, v_j)$ of $G'$ where $x\in v_i$ and $y\in v_j$.
Since each  $\Delta$-regular graph has exactly  $(\Delta!)^n$ corresponding configurations, a  $\Delta$-regular graph can be generated uniformly at random by rejection sampling: choose a configuration uniformly at random,\footnote{To generate a random configuration, the points in a pair can be chosen sequentially:  the first point can be selected using any rule, as long as the second point in that pair is chosen uniformly from the remaining points.} and reject the result if the corresponding multigraph $G'$ is not simple or not connected.
The configuration model enables us to show properties of a random $\Delta$-regular graph by analyzing a multigraph $G'$ corresponding to a random configuration.

Based on the configuration model, we are ready to state the following \emph{Structural Lemma}, which is central in our analysis.
\begin{lemma}[Structural Lemma]
\label{lem:Fvw}
Let $\Delta=O(1)$ be such that $\Delta\geq 3$.
Let $G'$ be a multigraph corresponding to a uniformly random configuration.
Let $\{v,w\}$ be a vertex pair in $G'$ such that $\delta(v,w)\geq 2$.
With probability $1-o(n^{-2})$, we have $|D(v,w)|>3n/\log n$.
\end{lemma}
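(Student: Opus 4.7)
The plan is to work in the configuration model throughout and to exploit the locally tree-like structure of random $\Delta$-regular graphs: I would run BFS simultaneously from $v$ and $w$ in the partial configuration, and argue that on a high-probability structural event, enough vertices are forced to distinguish $\{v,w\}$.

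First I would expose the BFS ball of radius $r$ around $v$, for a parameter $r$ to be tuned. The standard first-moment bound on short cycles in the configuration model, together with an edge-exposure martingale concentration argument for the level sizes $|L_i(v)|$, shows that for $r$ up to roughly $(1/2-\epsilon)\log_{\Delta-1} n$ the ball is a perfect $(\Delta-1)$-ary tree $T_v$ with $|L_i(v)|=\Delta(\Delta-1)^{i-1}$ for $i\le r$, except on a rare, carefully controlled event. Doing the same from $w$ and combining via a union bound yields, with probability $1-o(n^{-3})$, clean BFS trees $T_v$ and $T_w$ in which graph distances agree with tree distances.

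On this clean event the count of distinguishers reduces to tree geometry. Write $d=\delta(v,w)\ge 2$. When $d\le r$, the $v$-to-$w$ tree-path lies inside $T_v$, and for any leaf $u$ of $T_v$ at depth $r$ one has $\delta(u,w)=r+d-2\,\mathrm{depth}(\mathrm{LCA}(u,w))$, so $u$ fails to distinguish $\{v,w\}$ only if its LCA with $w$ sits at one of at most two specific depths along the $v$-$w$ path; summing the off-path subtree sizes at those depths against the total $\Delta(\Delta-1)^{r-1}$ leaves shows that at least a constant fraction $c_\Delta>0$ of leaves lie in $D(v,w)$. When $d>r$, the triangle inequality (together with an analogous easier tree computation in the intermediate regime) forces $\delta(u,w)\ge r+2$ for essentially every leaf of $T_v$, so the distinguisher count is even larger.

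The main obstacle is choosing $r$ large enough that $c_\Delta(\Delta-1)^r > 3n/\log n$ while retaining sharp enough control on the BFS structure. This forces $r$ up close to $\log_{\Delta-1}(n/\mathrm{polylog}\, n)$, well past the strictly tree-like regime where a single $T_v$ is a clean $(\Delta-1)$-ary tree. I expect the \emph{interesting vertex} concept of \cref{def:interesting} to bridge exactly this gap: it should single out, for each candidate $u$ at roughly the right distance from $v$ and $w$, a mild local tree-likeness condition (for instance the $O(\log\log n)$-ball around $u$ being cycle-free and of the expected size) that is satisfied by all but an $o(1/\log n)$ fraction of vertices and is strong enough to transport the LCA counting above to a computation centered at $u$ rather than at $v$. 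Aggregating the rare failure events — level-size deviations, short-cycle overlaps, and atypical $(v,w)$ geometry — should then yield the target bound $o(n^{-2})$ on the event $|D(v,w)|\le 3n/\log n$.
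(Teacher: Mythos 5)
Your sketch covers the easy regime but leaves the heart of the lemma unproven. The clean-tree event you rely on (BFS balls around $v$ and $w$ being perfect $(\Delta-1)$-ary trees) is only available up to radius roughly $\tfrac{1}{2}\log_{\Delta-1} n$, while the bound $|D(v,w)|>3n/\log n$ forces you to count distinguishers out to radius about $\log_{\Delta-1}(3n/\log n)$, where the ball contains $\Theta(n/\log n)$ vertices and is provably \emph{not} tree-like (it contains many cycles, and the number of ``shortcut'' edges can be polynomially large in $n$). You acknowledge this and say you ``expect'' the interesting-vertex concept to bridge the gap, but the bridge you hypothesize --- a local condition at each candidate $u$, e.g.\ a cycle-free $O(\log\log n)$-ball around $u$, holding for all but an $o(1/\log n)$ fraction of vertices --- is both different from what is needed and insufficient. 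First, whether $u$ distinguishes $\{v,w\}$ is not a local property of $u$: it depends on the absence of shortcuts anywhere along geodesics from $v$ and from $w$ to $u$, which is why the paper's \cref{def:interesting} is a condition on \emph{every} vertex of a shortest $v$-to-$u$ path (all incident edges indispensable), and why \cref{lem:interesting-distinguish} needs the predecessor/level bookkeeping to force $\delta(w,u)\geq \delta(v,u)+2$. Second, an expected-fraction statement about vertices does not yield the required failure probability $o(n^{-2})$ for the event that the \emph{total} count drops below $3n/\log n$; you need a high-probability structural statement plus a deterministic counting argument on that event.

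The missing quantitative engine is exactly the paper's \cref{lem:dispensable,lem:number-interesting}: one chooses $M=\lceil\log\log n\rceil$ nested radii $L_1<\dots<L_M$ with $L_M\geq\lceil\log_{\Delta-1}(3n/\log n)\rceil+2$ and shows, with probability $1-o(n^{-2})$, that the number $g_i$ of dispensable edges among edges incident to $U_{\leq L_i}$ is small \emph{relative to the previous scale}, $g_i=o\big((\Delta-1)^{L_{i-1}}/M\big)$ --- not zero, and not even $o(1)$, since near $L_M$ there are genuinely many such edges. Then interesting vertices are propagated inductively from level $L_{i-1}$ to level $L_i$: each dispensable edge contaminates at most two of the $(\Delta-1)^{L_{i-1}}$-many subtrees hanging off interesting vertices, so one keeps a $(\Delta-2-o(1))(\Delta-1)^{k-1}$ count of interesting (hence distinguishing) vertices all the way out to $k=\lceil\log_{\Delta-1}(3n/\log n)\rceil+2$. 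Without this multi-scale control and inductive propagation (or a substitute for it), your argument stops at radius $\tfrac12\log_{\Delta-1} n$, which only gives $|D(v,w)|\gtrsim\sqrt{n}$, far short of $3n/\log n$. So the proposal as written has a genuine gap precisely where the proof is hard.
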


In \cref{sec:proof:Fvw}, we prove the Structural Lemma, and in \cref{sec:proof-random}, we show \cref{lem:random} using the Structural Lemma.

\subsection{Proof of the Structural Lemma (\cref{lem:Fvw})}
\label{sec:proof:Fvw}
Let $G'$ be a multigraph corresponding to a uniformly random configuration, and let $V$ be the vertex set of $G'$.
Let $\{v,w\}\subseteq V$ be a vertex pair such that $\delta(v,w)\geq 2$.
For a vertex $x\in V$, denote $\ell(x)\in \mathbb{Z}$ as the distance in $G'$ between $x$ and the vertex pair $\{v,w\}$, i.e., $\ell(x)=\min(\delta(x,v),\delta(x,w))$.
For any integer $k\geq 0$, denote $U_k\subseteq V$ as the set of vertices $x\in V$ such that $\ell(x)=k$.
Denote $U_{\leq k}=\bigcup_{j\leq k}U_j$.

To construct the multigraph $G'$ from a random configuration, we borrow the approach from~\cite{bollobas1982distinguishing}, which proceeds in $n$ phases to construct the edges in $G'$, exploring vertices $x\in V$ in non-decreasing order of $\ell(x)$.
We start at the vertices of $U_0=\{v,w\}$.
Initially, i.e., in the $0$-th phase, we construct all the edges incident to $v$ or incident to $w$.
Suppose at the beginning of the $k$-th phase, for each $k\in[1,n-1]$, we have constructed all the edges with at least one endpoint belonging to $U_{\leq k-1}$.
During the $k$-th phase, we construct the edges incident to the vertices in $U_k$ one by one, till the degree of all the vertices in $U_k$ reaches $\Delta$.
The ordering of the edge construction within the same phase is arbitrary.
Let $G'$ be the resulting multigraph in the end of the construction.\footnote{When a multigraph corresponding to a random configuration is not connected, the resulting $G'$ consists of the union of the components of $v$ and of $w$, respectively, in that multigraph.
Note that any vertex $x\in V$ outside those two components cannot distinguish $v$ and $w$ (i.e., $x\notin D(v,w)$), thus $x$ is irrelevant to $|D(v,w)|$ in the statement of \cref{lem:Fvw}.}
The ordering of the edges in $G'$ is defined according to the above edge construction.

An edge $(a,b)$ in $G'$ is \emph{indispensable} if it explores either the vertex $a$ or the vertex $b$ for the first time in the edge construction.
In the first case, $b$ is the \emph{predecessor} of $a$; and in the second case, $a$ is the \emph{predecessor} of $b$.
An edge is \emph{dispensable} if it is not indispensable, in other words, if each of its endpoints either belongs to $\{v,w\}$ or is an endpoint of an edge constructed previously.

\begin{fact}
\label{fact:dispensable}
Neither $v$ or $w$ has a predecessor.
For any vertex in $V$, its predecessor, if exists, is unique.
If vertex $a$ is the predecessor of vertex $b$, then $\ell(b)=\ell(a)+1$.
\end{fact}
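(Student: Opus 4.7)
The proof plan is to unpack the definitions of the phased construction. All three parts follow essentially by bookkeeping, so there is no single hard step; the main conceptual point is to verify that the phase-by-phase construction really implements breadth-first search from $\{v,w\}$ in $G'$.

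First, for the claim that neither $v$ nor $w$ has a predecessor, I would observe that the predecessor of a vertex $x$ is, by definition, the other endpoint of the edge that first explores $x$ in the construction. Since $v$ and $w$ are the seed vertices forming $U_0$ and are present before any edge is built, every edge incident to $v$ (resp.\ $w$) already has $v$ (resp.\ $w$) as an explored endpoint the moment it is constructed. Consequently no edge ever explores $v$ or $w$ for the first time, and the predecessor relation never assigns them one.

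For uniqueness, I would simply appeal to the fact that the edges of $G'$ are constructed one at a time in some fixed linear order (arbitrary within a phase, but fixed once chosen). Any vertex $b\in V\setminus\{v,w\}$ has a uniquely determined first edge touching it, namely the earliest edge in this order that is incident to $b$. By definition the predecessor of $b$ is the other endpoint of that single edge, hence unique.

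The level-equality $\ell(b)=\ell(a)+1$ requires slightly more care. I would first establish by induction on $k$ that the set of vertices appearing in the edges constructed during phases $0,1,\dots,k$ equals $U_{\leq k}$, the BFS ball of radius $k$ around $\{v,w\}$ in $G'$. The base case is immediate because phase $0$ builds all edges incident to $\{v,w\}=U_0$, thereby introducing exactly the vertices at graph-distance at most $1$; the inductive step follows because phase $k$ processes precisely the edges incident to $U_k$, and any newly introduced endpoint must lie at graph-distance exactly $k+1$ from $\{v,w\}$ in $G'$. Granting this, if $a$ is the predecessor of $b$ then the edge $(a,b)$ is constructed during phase $\ell(a)$ and introduces $b$ for the first time, giving $\ell(b)\geq \ell(a)+1$; the reverse inequality $\ell(b)\leq \ell(a)+1$ is the triangle inequality along the edge $(a,b)$. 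The only mild obstacle is the induction itself, but it is immediate from the description of the construction.
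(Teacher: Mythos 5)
Your proof is correct and matches the paper's treatment: the paper states this as a Fact with no proof because it follows directly from unpacking the BFS-style phased edge construction and the definition of predecessor, which is exactly what you do. One cosmetic index slip: your induction hypothesis should say that the vertices appearing in edges of phases $0,\dots,k$ form $U_{\leq k+1}$ (your own base case already notes that phase $0$ introduces the distance-$\leq 1$ vertices), but the step you actually use for $\ell(b)=\ell(a)+1$ — that at the start of phase $k$ exactly $U_{\leq k}$ has been explored, so a vertex first explored in phase $k$ lies at level $\geq k+1$, combined with the triangle inequality along the edge — is the correct one, so nothing breaks.
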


We introduce the concept of \emph{interesting vertices}, which is a key idea in the analysis.

\begin{definition}[Interesting Vertices]
\label{def:interesting}
A vertex $x\in V$ is \emph{$v$-interesting} if, for all vertices $z\in V\setminus\{v\}$ with $\delta(v,z)+\delta(z,x)=\delta(v,x)$, the edges incident to $z$ are indispensable.
Similarly, a vertex $x\in V$ is \emph{$w$-interesting} if, for all vertices $z\in V\setminus\{w\}$ with $\delta(w,z)+\delta(z,x)=\delta(w,x)$, the edges incident to $z$ are indispensable.
\end{definition}
For any finite integer $k\geq 1$, let $I_k(v)\subseteq V$ denote the set of $v$-interesting vertices $x\in V$ such that $\delta(v,x)=k$, and let $I_k(w)\subseteq V$ denote the set of $w$-interesting vertices $x\in V$ such that $\delta(w,x)=k$.

We show in \cref{lem:interesting-distinguish} that interesting vertices distinguish the vertex pair $\{v,w\}$, and we provide a lower bound on the number of interesting vertices in \cref{lem:number-interesting}.
These two lemmas are main technical novelties in our work.
Their proofs are in \cref{sec:proof-interesting-distinguish,sec:proof:number-interesting}, respectively.

\begin{lemma}
\label{lem:interesting-distinguish}
For any finite integer $k\geq 1$, we have $I_k(v)\cup I_k(w)\subseteq D(v,w)$.
\end{lemma}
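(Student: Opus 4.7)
By symmetry it is enough to prove $I_k(v)\subseteq D(v,w)$. Fix $x\in I_k(v)$ with $\delta(v,x)=k$, and fix any shortest $v$-to-$x$ path $v=z_0,z_1,\ldots,z_k=x$. The $v$-interesting hypothesis hands us, for free, the fact that every $z_i$ with $i\geq 1$ has only indispensable edges and hence a unique predecessor in the BFS forest. The plan is to translate this purely combinatorial hypothesis into distance information about $w$, and then to prove $\delta(w,x)\geq k+2$, which gives $|\delta(v,x)-\delta(w,x)|\geq 2$.

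The first step is a structural claim: I show by induction on $i$ that $\ell(z_i)=i$ and that $z_{i-1}$ is the predecessor of $z_i$. The key ingredient is that a same-level edge (both endpoints in the same $U_j$) is always dispensable, because when the edge is constructed both endpoints have already been created via their respective predecessors. Combined with the uniqueness of the predecessor of $z_i$, the indispensability of $(z_{i-1},z_i)$ forces $z_{i-1}$ to be precisely that predecessor at each step. The base case $i=1$ needs a small separate check, because the predecessor of $z_1$ could a priori be $w$; but $\delta(v,w)\geq 2$ rules that out since otherwise $(v,z_1)$ would be dispensable. As a direct consequence, $\ell(x)=k$ and hence $\delta(w,x)\geq k$.

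The heart of the proof is to rule out $\delta(w,x)\in\{k,k+1\}$. Suppose so, and let $y$ be the neighbor of $x$ on some shortest $w$-to-$x$ path, so that $\delta(w,y)=\delta(w,x)-1\in\{k-1,k\}$. The edge $(x,y)$ is incident to $x$ and hence indispensable, and it cannot be a same-level edge, so $\ell(y)=k-1$. This forces $y$ to be the unique predecessor of $x$, i.e.\ $y=z_{k-1}$. I then propagate this observation backward along the path: at each $z_i$ the $v$-interesting hypothesis supplies the same two ingredients (unique predecessor and no same-level incident edges), so the neighbor of $z_i$ on any shortest $w$-to-$z_i$ path must again be $z_{i-1}$. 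A short backward induction gives $\delta(w,z_0)=\delta(w,v)\leq 1$, contradicting $\delta(v,w)\geq 2$. Therefore $\delta(w,x)\geq k+2$, and $x\in D(v,w)$.

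The main obstacle I foresee is the careful bookkeeping around ``same-level edges are dispensable'' and the small cases (notably $k=1$, and the possibility that $w$ itself lies on a shortest $v$-to-$x$ path or is adjacent to some $z_i$), rather than any conceptual difficulty. Once same-level dispensability and uniqueness of the predecessor are firmly in hand, the rest of the argument is a rigid inductive propagation along the $v$-to-$x$ path, with no delicate estimates to make.
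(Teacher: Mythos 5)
Your proposal is correct, and its first half is exactly the paper's argument: the induction along the shortest $v$-to-$x$ path showing that $z_{i-1}$ is the unique predecessor of $z_i$ and $\ell(z_i)=i$. (The base case needs no appeal to $\delta(v,w)\geq 2$: since $v$ is never explored by an edge, an indispensable edge $(v,z_1)$ must explore $z_1$, so $v$ is its predecessor.) Where you diverge is the endgame. The paper argues directly: it takes an arbitrary shortest $w$-to-$x$ path, lets $z$ be the vertex where it branches off the $v$-to-$x$ path and $y$ the next vertex on the $w$-side, and applies the predecessor argument once at $z$ to get $\ell(y)=\ell(z)+1$, whence $\delta(w,x)\geq \ell(y)+\delta(y,x)=k+2$. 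You instead assume $\delta(w,x)\leq k+1$ and run a backward induction forcing every shortest $w$-to-$z_i$ path to enter $z_i$ through $z_{i-1}$, concluding $\delta(w,v)\leq 1$, a contradiction; this is an iterated, contrapositive version of the same mechanism (your induction in effect shows the branch point would have to be $v$ itself). Both routes rest on the same two ingredients, uniqueness of predecessors and the level formula of \cref{fact:dispensable}, which also justifies your observation that same-level edges are dispensable. The paper's version is shorter (one application, no contradiction); yours is more mechanical but requires the bookkeeping you partly gloss over: to get $\ell(y)=k-1$ you must also exclude $\ell(y)=k+1$, which follows from $\ell(y)\leq\delta(w,y)\leq k$ under your hypothesis (and similarly $\ell(y')\leq\delta(w,y')\leq i$ at step $i$), and at each step the neighbor of $z_i$ on a shortest $w$-to-$z_i$ path exists because $\ell(z_i)=i\geq 1$ forces $z_i\neq w$. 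With those lines made explicit, your argument is complete.
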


\begin{lemma}
\label{lem:number-interesting}
Let $\Delta=O(1)$ be such that $\Delta\geq 3$.
Let $k$ be any positive integer such that $k\leq \lceil \log_{\Delta-1} (3n/\log n) \rceil+2$.
With probability $1-o(n^{-2})$, we have $|I_k(v)\cup I_k(w)|> (\Delta-2-o(1))(\Delta-1)^{k-1}.$
\end{lemma}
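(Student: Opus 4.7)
The plan is to expose the configuration via the BFS-based phased construction from \cref{sec:proof:Fvw}, and to show that dispensable edges destroy only a small fraction of the would-be interesting vertices. Since $k \leq \lceil \log_{\Delta - 1}(3n/\log n) \rceil + 2$, a deterministic tree-growth bound gives that the total number of points exposed by the end of phase $k - 1$ is $M = O((\Delta - 1)^k) = O(n/\log n)$. By the principle of deferred decisions, when a free point is paired with a uniformly random remaining point, the conditional probability of creating a dispensable edge is at most $M/(\Delta n - 2M) = O(1/\log n)$.

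Next, I would quantify the damage. By \cref{def:interesting}, a vertex $x$ at distance $k$ from $v$ belongs to $I_k(v)$ if and only if every non-root vertex $z$ on a shortest $v$-to-$x$ path has only indispensable incident edges. Consequently, if a dispensable edge has a non-root endpoint $z$ at level $\ell(z) = i$, then the entire BFS subtree rooted at $z$, of size at most $(\Delta - 1)^{k - i}$, is excluded from $I_k(v)$ (or from $I_k(w)$ if $z$ is on the $w$-side). In the idealized dispensable-free case, the $v$-BFS and $w$-BFS are disjoint trees and $|I_k(v) \cup I_k(w)| = 2\Delta(\Delta - 1)^{k - 1}$. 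In general, letting $D_i$ denote the number of dispensable edges with minimum-level endpoint at level $i$, one has $|I_k(v) \cup I_k(w)| \geq 2\Delta(\Delta - 1)^{k - 1} - 2 \sum_{i} D_i (\Delta - 1)^{k - i}$. It thus suffices to show $\sum_{i = 0}^{k - 1} D_i (\Delta - 1)^{k - i} \leq \tfrac{1}{2}(\Delta + 2 + o(1))(\Delta - 1)^{k - 1}$ with probability $1 - o(n^{-2})$.

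The main obstacle is this concentration bound: a direct computation gives $\mathbb{E}\left[\sum_i D_i (\Delta - 1)^{k - i}\right] = O((\Delta - 1)^{k - 1}/\log n)$, but Markov's inequality only yields a $1 - o(1)$ tail bound, far short of the required $1 - o(n^{-2})$. The remedy is to split the sum at a threshold $i_0 = \Theta(\log \log n)$. For low levels $i \leq i_0$, the total number of exposed edges is $O((\Delta - 1)^{i_0}) = \polylog(n)$, and each has dispensability probability $O((\Delta - 1)^{i_0}/n) = O(\polylog(n)/n)$; a union bound over constant-size subsets of edges then shows that only a constant number of dispensable edges appear at low levels with probability $1 - o(n^{-2})$, contributing a bounded constant multiple of $(\Delta - 1)^{k - 1}$. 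For high levels $i > i_0$, the per-edge damage increment is at most $(\Delta - 1)^{k - i_0}$, which is much smaller than the budget; a Bernstein-type martingale inequality applied to the edge-by-edge exposure, conditional on the low-level outcome, shows that the high-level contribution is $o((\Delta - 1)^{k - 1})$ with probability $1 - \exp(-\Omega(\polylog(n))) = 1 - o(n^{-2})$. Combining the two bounds completes the proof.
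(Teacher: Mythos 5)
Your proposal reaches the lemma by a genuinely different route from the paper. The paper never weights the damage of a dispensable edge: \cref{lem:dispensable} sets up $M=\lceil\log\log n\rceil$ nested radii $L_1<\dots<L_M$ and shows, by a binomial tail bound at each scale, that with probability $1-o(n^{-2})$ fewer than $g_i$ dispensable edges touch $U_{\leq L_i}$, with $g_i=o\left((\Delta-1)^{L_{i-1}}/M\right)$; it then deterministically regrows clean $(\Delta-1)$-ary subtrees from the interesting vertices of the previous scale, losing at most $2g_i$ anchors (an $o(1/M)$ fraction) per scale. You instead split once at $i_0=\Theta(\log\log n)$, bound the number of low-level dispensable edges by a constant via a union bound, and control the weighted damage $\sum_i D_i(\Delta-1)^{k-i}$ at high levels by a Freedman/Bernstein martingale inequality. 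That mechanism does work: the predictable variance of the weighted sum is $O\big(k(\Delta-1)^{2k}/n\big)=O\big((\Delta-1)^k\big)$ and the largest increment is $O\big((\Delta-1)^{k-i_0}\big)$, so the exponent in Freedman's bound is governed by the ratio of the allowed deviation $o\big((\Delta-1)^{k-1}\big)$ to that increment, i.e.\ by $(\Delta-1)^{i_0}$ up to your $o(1)$ factor; for the claimed $o(n^{-2})$ you need $(\Delta-1)^{i_0}=\omega(\log n)$, which holds only if the constant in $i_0=\Theta(\log\log n)$ is chosen large enough — your ``$\exp(-\Omega(\polylog n))=o(n^{-2})$'' should be made explicit on this point. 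In exchange for invoking a martingale inequality, you avoid the paper's multi-scale bookkeeping entirely; the paper's argument is purely combinatorial once \cref{lem:dispensable} is proved.

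Two soft spots at the bottom levels need the kind of care the paper takes there. First, the set of level-$k$ vertices excluded by a dispensable edge incident to $z$ is not contained in the BFS subtree of $z$ once other dispensable edges exist (a shortest $v$-to-$x$ path need not follow tree edges); you should bound it by the ball of radius $k-\delta(v,z)$ around $z$, which costs a factor about $\Delta/(\Delta-2)$ — harmless at high levels, but it eats into the constant at low levels. Second, your stated budget $\sum_i D_i(\Delta-1)^{k-i}\leq\tfrac12(\Delta+2+o(1))(\Delta-1)^{k-1}$ can be violated by your own weights in corner cases (e.g.\ $\Delta=3$ with two low-level dispensable edges having level-$0$ endpoints, or hitting four distinct level-$1$ children), even though the conclusion still holds, because an endpoint at $v$ or $w$ causes little or no exclusion (the definition excludes $z=v$, resp.\ $z=w$) and because only one of the two roots needs to survive. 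The clean repair is exactly the paper's bottom-level case analysis: with at most two dispensable edges inside radius $L_1$, at least one of $v,w$ retains $\Delta-2$ children whose subtrees are untouched, and one only needs $(\Delta-2-o(1))(\Delta-1)^{k-1}$ rather than the ideal $2\Delta(\Delta-1)^{k-1}$ minus crude damage. With these repairs, your argument goes through.
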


The Structural Lemma (\cref{lem:Fvw}) follows easily from \cref{lem:interesting-distinguish,lem:number-interesting}, see \cref{sec:proof-structral-lemma}.

\subsubsection{Proof of \cref{lem:interesting-distinguish}}
\label{sec:proof-interesting-distinguish}
Fix a finite integer $k\geq 1$.
From the symmetry of $v$ and $w$, it suffices to prove $I_k(v)\subseteq D(v,w)$.

Let $x$ be any vertex in $I_k(v)$.
By definition, $x$ is $v$-interesting and $\delta(v,x)=k$.
Let $a_0 =v,a_1,\dots,a_k=x$ be any shortest $v$-to-$x$ path.
For any vertex $a_i$ with $i\in[1,k]$, the edges incident to $a_i$ are indispensable according to \cref{def:interesting}.

We claim that, for any $i\in[1,k]$, $a_{i-1}$ is the predecessor of $a_i$, and in addition, $\ell(a_i)=i$.
The proof is by induction.
First, consider the case when $i=1$.
The edge $(a_0,a_1)$ is incident to the vertex $a_1$, so is indispensable.
Thus either $a_0$ is the predecessor of $a_1$, or $a_1$ is the predecessor of $a_0$.
Since $a_0$ $(=v)$ has no predecessor (\cref{fact:dispensable}), $a_1$ cannot be the predecessor of $a_0$, so $a_0$ is the predecessor of $a_1$.
Again using \cref{fact:dispensable}, we have $\ell(a_1)=\ell(a_0)+1$.
Since $\ell(a_0)=\ell(v)=0$, we have $\ell(a_1)=1$.
Next, consider the case when $i\geq 2$, and assume that the claim  holds already for $1,\dots,i-1$.
The edge $(a_{i-1},a_i)$ is incident to the vertex $a_i$, so is indispensable.
Thus either $a_{i-1}$ is the predecessor of $a_i$, or $a_i$ is the predecessor of $a_{i-1}$.
By induction, $a_{i-2}$ is the predecessor of $a_{i-1}$.
Since the predecessor of $a_{i-1}$ is unique (\cref{fact:dispensable}), $a_i$ cannot be the predecessor of $a_{i-1}$, so $a_{i-1}$ is the predecessor of $a_i$.
Again using \cref{fact:dispensable}, we have $\ell(a_i)=\ell(a_{i-1})+1$.
Since $\ell(a_{i-1})=i-1$ by induction, we have $\ell(a_i)=i$.

In order to show that $x\in D(v,w)$, we prove in the following that $\delta(w,x)\geq k+2$.
Indeed, since $\delta(v,x)=k$, the event $\delta(w,x)\geq k+2$ implies that $x\in D(v,w)$ by \cref{def:distinguish}.\footnote{When $\delta(w,x)$ is infinite (i.e., $w$ and $x$ are not connected in $G'$), it is trivial that $x\in D(v,w)$, since $\delta(v,x)$ is finite. Therefore, it suffices to consider the case when $\delta(w,x)$ is finite in the rest of the proof.}

Let $b_0 =w,b_1,\dots,b_{k'}=x$ be any shortest $w$-to-$x$ path, for some integer $k'$.
See \cref{fig:dispensable}.
Let $i^*\in [0,k]$ be the largest integer such that $a_{k-j}=b_{k'-j}$ for all $j\in[0,i^*]$.
Let $z$ denote the vertex $a_{k-i^*}$, which equals $b_{k'-i^*}$.
If $i^*=k$, the $v$-to-$x$ path $a_0,a_1,\dots,a_k$ is a subpath of the $w$-to-$x$ path $b_0,b_1,\dots,b_{k'}$.
Since $\delta(w,v)\geq 2$, we have $\delta(w,x)=\delta(w,v)+\delta(v,x)\geq 2+k$, which implies that $x\in D(v,w)$.
From now on, it suffices to consider the case when $i^*<k$.

\begin{figure}[t]
\centering
\includegraphics[scale=0.8]{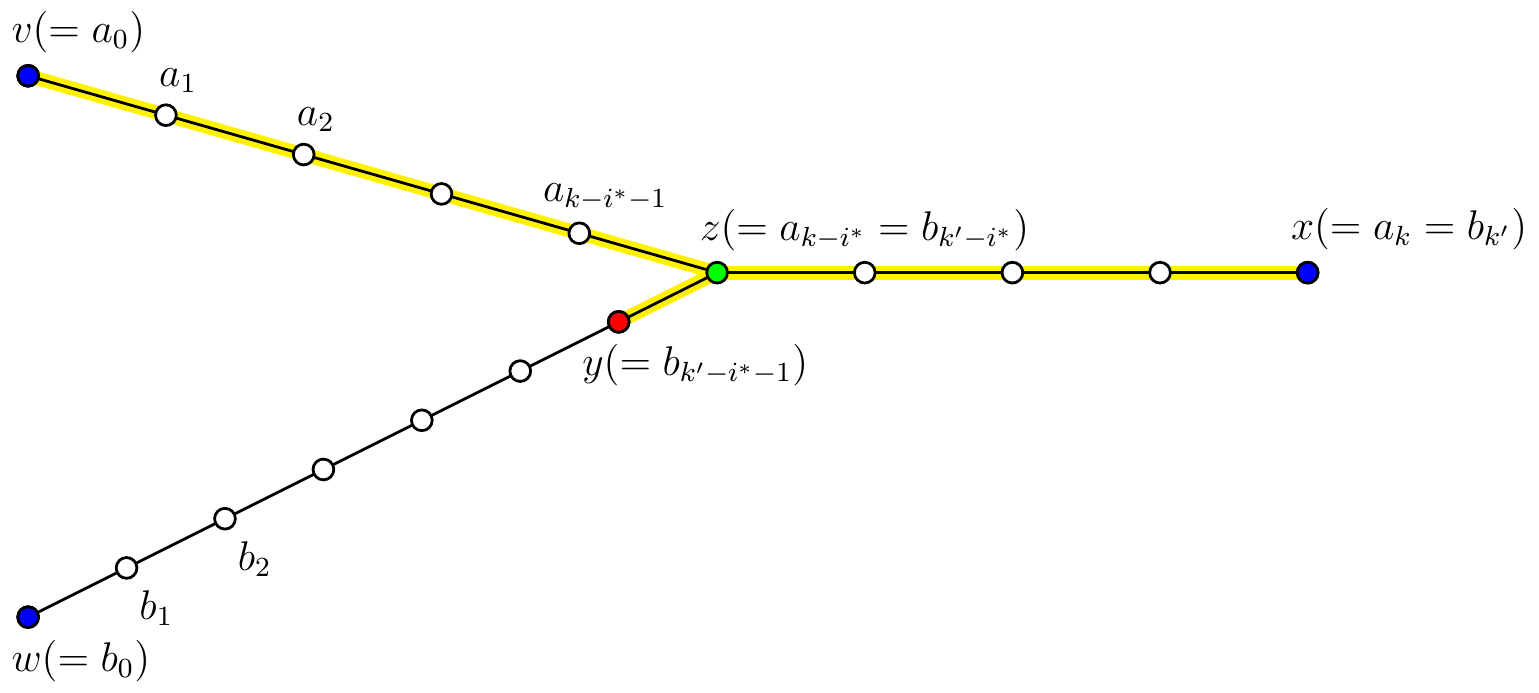}
\caption{
$a_0,a_1,\dots,a_k$ is a shortest $v$-to-$x$ path, and $b_0,b_1,\dots,b_{k'}$ is
a shortest $w$-to-$x$ path.
The vertex $z$ represents the branching point of these two paths.
Since the vertex $x$ is $v$-interesting, the highlighted edges are indispensable.
}
\label{fig:dispensable}
\end{figure}

Let $y$ denote the vertex $b_{k'-i^*-1}$.
Since $y$ is on a shortest $w$-to-$x$ path, we have
\begin{equation}
\label{eqn-wx}
\delta(w,x)=\delta(w,y)+\delta(y,x)=\delta(w,y)+(i^*+1)\geq \ell(y)+(i^*+1),
\end{equation}
where the inequality follows from the definition of $\ell(y)$.
It remains to analyze the value of $\ell(y)$.

The edge $(z,y)$ is incident to the vertex $z$ $(=a_{k-i^*})$, so is indispensable.
Thus either $y$ is the predecessor of $z$, or $z$ is the predecessor of $y$.
From the previous claim, $a_{k-i^*-1}$ is the predecessor of $z$.
Since the predecessor of $z$ is unique (\cref{fact:dispensable}) and $y\neq a_{k-i^*-1}$ (by definition of $i^*$), $y$ cannot be the predecessor of $z$, so $z$ is the predecessor of $y$.
Again by \cref{fact:dispensable}, $\ell(y)=\ell(z)+1$.
Since $\ell(z)=\ell(a_{k-i^*})=k-i^*$ by the previous claim, we have $\ell(y)=k-i^*+1$.
We conclude from \cref{eqn-wx} that \[\delta(w,x)\geq (k-i^*+1)+(i^*+1)=k+2,\] which implies that $x\in D(v,w)$.

We proved that $I_k(v)\subseteq D(v,w)$.
Similarly, $I_k(w)\subseteq D(v,w)$.
Therefore, $I_k(v)\cup I_k(w)\subseteq D(v,w)$.

We complete the proof of \cref{lem:interesting-distinguish}.

\subsubsection{Proof of \cref{lem:number-interesting}}
\label{sec:proof:number-interesting}

To begin with, we show that there are relatively few dispensable edges within a neighborhood of $\{v,w\}$.
This property, also called the \emph{locally tree-like} property, was previously exploited by Bollob{\'a}s~\cite{bollobas1982distinguishing} for three levels of neighborhoods on random $\Delta$-regular graphs  in the context of automorphisms of those graphs.
In \cref{lem:dispensable}, we extend the analysis from \cite{bollobas1982distinguishing} to show the locally tree-like property for $M=\lceil\log\log n\rceil$ levels of neighborhoods.

\begin{lemma}
\label{lem:dispensable}
Let $M=\lceil\log\log n\rceil$.
We can construct two non-decreasing sequences $\{g_i\}_{1\leq i\leq M}$ and $\{L_i\}_{1\leq i\leq M}$, such that all of the following properties hold when $n$ is large enough:
\begin{enumerate}
\item \label{prop:g_i} $g_1=3$; and for any $i\in [2,M]$, $g_{i}=o\left((\Delta-1)^{L_{i-1}}/M\right)$;
\item \label{prop:ell_M} $L_M\geq \lceil \log_{\Delta-1} (3n/\log n) \rceil+2$;
\item \label{prop:dispensable} With probability $1-o(n^{-2})$, for all $i\in [1,M]$, strictly less than $g_i$ edges are dispensable among the edges incident to vertices in $U_{\leq L_i}$.
\end{enumerate}
\end{lemma}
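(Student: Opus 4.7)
The plan is to construct $\{L_i\}$ and $\{g_i\}$ inductively via a ``doubling'' recursion, growing the BFS exploration radius from $L_{i-1}$ to $L_i$ at step~$i$ while controlling the number of new dispensable edges. The key estimate is the following: during the configuration-model BFS from $\{v,w\}$, whenever we add a new edge whose first endpoint is a vertex at distance $>L_{i-1}$ from $\{v,w\}$, the second endpoint is a uniformly random remaining half-edge, so the conditional probability that it lands in a cell already touched by the BFS up to radius $L_i$ is at most $\Delta\,|U_{\le L_i}|/(\Delta n - O(|U_{\le L_i}|)) = O((\Delta-1)^{L_i}/n)$, using the deterministic $\Delta$-regular bound $|U_{\le L_i}| = O((\Delta-1)^{L_i})$.

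Concretely, I would fix a large constant $C$, set $L_\infty := \log_{\Delta-1}(n/(CM^2))$, choose a slowly growing base $L_1 := \lceil \log_{\Delta-1}(3M^2) \rceil$, and define $L_i := \lfloor (L_{i-1}+L_\infty)/2 \rfloor$ for $i \ge 2$; set $g_1 := 3$ and $g_i := \lceil (\Delta-1)^{L_{i-1}}/M^2 \rceil$ for $i \ge 2$. Both sequences are non-decreasing because $L_1 \le L_\infty$. Property~1 is immediate: $g_i = O((\Delta-1)^{L_{i-1}}/M^2) = o((\Delta-1)^{L_{i-1}}/M)$. Solving the linear recursion yields $L_i = L_\infty - (L_\infty - L_1)/2^{i-1}$, so $L_M \ge L_\infty - O(1)$ after $M = \lceil\log\log n\rceil$ doublings, and $L_\infty = \log_{\Delta-1}(n) - O(\log\log n)$ exceeds $\log_{\Delta-1}(3n/\log n)+2$ since $(\log\log n)^2 = o(\log n)$; this gives Property~2.

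For Property~3 I would use induction on $i$. Conditioning on the event that fewer than $g_{i-1}$ dispensable edges have appeared up to level $L_{i-1}$, the BFS from level $L_{i-1}+1$ to $L_i$ constructs at most $t_i = O((\Delta-1)^{L_i})$ new edges, each dispensable with conditional probability at most $p_i = O((\Delta-1)^{L_i}/n)$. Since each trial is stochastically dominated by an independent $\mathrm{Bernoulli}(p_i)$,
\[
\Pr\bigl[\ge k_i \text{ new dispensable edges}\bigr]\;\le\;\binom{t_i}{k_i}p_i^{k_i}\;\le\;\bigl(e\,t_i p_i/k_i\bigr)^{k_i}
\]
with $k_i := g_i - g_{i-1}$. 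The choice of $L_i$ forces $(\Delta-1)^{2L_i}/n \le (\Delta-1)^{L_{i-1}}/(CM^2)$, hence $e\,t_i p_i \le g_i/4$ for $C$ large enough; since $g_i$ grows at least geometrically past the base, $k_i \ge g_i/2$, and the bound collapses to $(1/2)^{g_i/2}$, which is $o(n^{-2}/M)$ whenever $g_i \ge 6\log n$. The base case $i=1$ is handled directly: $t_1 p_1 = \mathrm{poly}(M)/n$ and $k_1 = 3$, giving probability $O(\mathrm{poly}(M)/n^3) = o(n^{-2}/M)$. A union bound over $i \in [1,M]$ then yields Property~3.

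The main obstacle is the simultaneous calibration of the three properties. Property~1 tightly caps the growth of $g_i$ by $(\Delta-1)^{L_{i-1}}/M$, Property~2 forces $L_M$ to reach $\approx \log_{\Delta-1}(n)$ in only $\log\log n$ steps, and Property~3 requires the expected count of new dispensable edges per step, $\Theta((\Delta-1)^{2L_i}/n)$, to stay well below $g_i$. The doubling schedule $L_i = (L_{i-1}+L_\infty)/2$ is essentially the unique natural choice that reconciles these: a slower schedule would not reach the target radius in $M$ steps, whereas a faster one would blow up the per-step failure probability at the final levels, where $(\Delta-1)^{L_i}$ becomes comparable to $n/\log n$. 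This generalizes Bollob{\'a}s's three-level estimate from \cite{bollobas1982distinguishing} to the $\Theta(\log\log n)$-level regime.
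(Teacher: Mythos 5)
Your overall strategy is the right one and matches the paper's in spirit: a radius schedule whose gap to $\log_{\Delta-1} n$ roughly halves at each of the $M$ steps, a per-edge bound of order $(\Delta-1)^{L_i}/n$ on the conditional probability that a newly constructed edge is dispensable, binomial tail estimates, and a union bound over the $M$ levels. However, the specific calibration of $L_1$ and $g_i$ does not work, and the failure is not merely a gap in the argument but a counterexample to Property~3 for your sequences. Take $i=2$: you have $(\Delta-1)^{L_1}=\Theta(M^2)$, so $g_2=\lceil(\Delta-1)^{L_1}/M^2\rceil=O_\Delta(1)$, while $L_2\approx(L_1+L_\infty)/2$ gives $(\Delta-1)^{L_2}=\Theta(\sqrt{n})$. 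The ball $U_{\leq L_2}$ is then incident to $\Theta(\sqrt n)$ edges, each dispensable with probability $\Theta((\Delta-1)^{L_2}/n)$, so the number of dispensable edges at that level is asymptotically Poisson with constant mean $\Theta_{\Delta,C}(1)$; hence the event ``fewer than $g_2$ dispensable edges among edges incident to $U_{\leq L_2}$'' fails with probability bounded below by a positive constant, not $o(n^{-2})$. Your own tail computation signals this: the bound $(1/2)^{g_i/2}$ is only $o(n^{-2}/M)$ ``whenever $g_i\geq 6\log n$,'' a condition you never verify and which is false at $i=2$ (similarly, $k_2=g_2-g_1$ can even be $0$, so ``$k_i\geq g_i/2$'' is unjustified there). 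The root cause is that you pinned $g_i$ to a constant factor above the expected number of dispensable edges, $\Theta((\Delta-1)^{2L_i}/n)=\Theta((\Delta-1)^{L_{i-1}}/(CM^2))$; a budget only a constant factor above the mean gives failure probability $o(n^{-2})$ only when that mean is $\Omega(\log n)$, which fails at the early levels because you started the schedule at $(\Delta-1)^{L_1}=\Theta((\log\log n)^2)$.

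The fix is a recalibration in the direction of the paper's choices: anchor the schedule so that the first ball is already large enough that a budget of $3$ suffices with probability $1-o(n^{-2})$ (the paper takes $f_1=\lceil n^{1/8}\rceil$, i.e.\ $(\Delta-1)^{L_1}=\Theta(n^{1/8})$, so the expected number of dispensable edges at level $L_1$ is $O(n^{-3/4})$ and $\Pr[\geq 3]=o(n^{-2})$), and make every later budget exceed the corresponding mean by a \emph{growing} factor while remaining $o((\Delta-1)^{L_{i-1}}/M)$ — the paper uses $g_i\approx(\Delta-1)^{L_{i-1}}/(\log n)^{1/6}$, which is simultaneously $\geq n^{1/8}/\sqrt{\log n}\gg\log n$ and a factor $\Theta((\log n)^{1/6})\gg M$ above the mean, so each level fails with probability $o(n^{-17/8})$ and the union bound over $M$ levels gives $o(n^{-2})$. (The paper's exponent schedule $f_i=\lceil n^{1-7/2^{i+2}}/(\log n)^{1/3}\rceil$ is exactly your gap-halving recursion, just started at exponent $1/8$; it also bounds each level directly rather than inducting on ``new'' dispensable edges, which removes the need for the $k_i\geq g_i/2$ step.) Your Properties~1 and~2 checks are fine as written, granted base-$2$ logarithms (with natural logs $2^{M-1}\ll\log n$ and your $L_M$ would fall short; the paper's computation implicitly assumes base $2$ as well).
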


\begin{proof}[Proof of \cref{lem:dispensable}]
First, we define two sequences $\{g_i\}_{1\leq i\leq M}$ and $\{f_i\}_{1\leq i\leq M}$ as follows: $g_1=3$, $f_1=\left\lceil n^{1/8}\right\rceil$, and for each $i\in [2,M]$, let
\[g_i=\left\lceil n^{1-7/2^{i+1}}/(\log n)^{1/2}\right\rceil,\]
\[f_i=\left\lceil n^{1-7/2^{i+2}}/(\log n)^{1/3}\right\rceil.\]
Next, we define the sequence $\{L_i\}_{1\leq i\leq M}$ as follows: for each $i\in[1,M]$, let \[L_i=\left\lceil\log_{\Delta-1} f_i\right\rceil-6.\]
It is easy to see that all of the three sequences $\{g_i\}_{1\leq i\leq M}$, $\{f_i\}_{1\leq i\leq M}$, and $\{L_i\}_{1\leq i\leq M}$ are non-decreasing.

To show Property~\ref{prop:g_i} of the statement, observe that for any $i\in[2,M]$,
\[g_i\cdot M=\left\lceil n^{1-7/2^{i+1}}/(\log n)^{1/2}\right\rceil\cdot \lceil\log\log n\rceil=o\left( n^{1-7/2^{i+1}}/(\log n)^{1/3}\right).\]
Thus $g_i\cdot M=o(f_{i-1})$ by definition of $f_{i-1}$.
From the definition of $L_{i-1}$ and the fact that $\Delta=O(1)$, we have $f_{i-1}=\Theta\left((\Delta-1)^{L_{i-1}}\right)$.
Therefore, $g_i\cdot M =o\left((\Delta-1)^{L_{i-1}}\right)$, hence $g_i =o\left((\Delta-1)^{L_{i-1}}/M\right)$.

To show Property~\ref{prop:ell_M}  of the statement, observe that
\[f_M\geq  n^{1-7/2^{(\log\log n)+2}}/(\log n)^{1/3}=2^{-7/4}\cdot n/(\log n)^{1/3}>(\Delta-1)^8\cdot 3n/\log n,\]
where the last inequality follows since $\Delta=O(1)$ and $n$ is large enough. Therefore,
$L_M=\lceil\log_{\Delta-1} f_M\rceil-6\geq \lceil \log_{\Delta-1} (3n/\log n) \rceil+2$.

It remains to show Property~\ref{prop:dispensable} of the statement.
Consider any integer $i\in[1,M]$.
Since the graph is $\Delta$-regular, the number of vertices in $U_{\leq L_i}$ is at most \[2+2\Delta\cdot\sum_{j=0}^{L_i-1}(\Delta-1)^j=2+\frac{2\Delta(\Delta-1)^{L_i}-2\Delta}{\Delta-2}<\frac{2\Delta(\Delta-1)^{L_i}}{\Delta-2}.\]
Let $n_i$ be the number of edges incident to vertices in $U_{\leq L_i}$.
Since each vertex is incident to $\Delta$ edges, we have $n_i< \frac{2\Delta^2(\Delta-1)^{L_i}}{\Delta-2}$.
Since $L_i=\left\lceil\log_{\Delta-1} f_i\right\rceil-6<\left(\log_{\Delta-1} f_i\right)-5$, we have
$n_i<\frac{2\Delta^2}{(\Delta-2)(\Delta-1)^5} \cdot f_i$, which is less than $f_i$ since $\Delta\geq 3$.

In order to bound the number of dispensable edges incident to vertices in $U_{\leq L_i}$, it suffices to bound the number of dispensable edges
among the first $f_i$ edges in the ordering of edge construction.

For any integer $t\in [1,\Delta n/2]$, denote $p(t)$ as the probability that the $t$-th edge in the construction is dispensable.
We use the argument of Bollob{\'a}s~\cite{bollobas1982distinguishing} to bound $p(t)$ as follows.
Before constructing the $t$-th edge, the previously constructed $t-1$ edges are incident to at most $t+1$ vertices.
For each of these $t+1$ vertices, at most $\Delta-1$ incident edges are not yet constructed.
Thus $p(t)\leq \frac{(\Delta-1) (t+1)}{\Delta n-2(t-1)}$, which is less than $\frac{2t}{n}$ as soon as $t=o(n)$.

From the definition of $f_i$, we have $f_i\leq n/(\log n)^{1/3}=o(n)$, thus $p(f_i)<\frac{2f_i}{n}$.
The probability that there exist $g_i$ dispensable edges among the first $f_i$ edges is at most
\[\binom{f_i}{g_i}\cdot \left(\frac{2f_i}{n}\right)^{g_i}<\left(\frac{e\cdot f_i}{g_i}\right)^{g_i}\cdot \left(\frac{2f_i}{n}\right)^{g_i},\]
where the inequality follows from Stirling's formula.
When $i=1$, we have
\[\left(\frac{e\cdot f_1}{g_1}\right)^{g_1}\cdot \left(\frac{2f_1}{n}\right)^{g_1}= \left(\frac{2e\cdot \left(\lceil n^{1/8}\rceil\right)^2}{3n}\right)^3= o(n^{-17/8}),\] and when $i\geq 2$, we have
\[\left(\frac{e\cdot f_i}{g_i}\right)^{g_i}\cdot \left(\frac{2f_i}{n}\right)^{g_i}=O\left(\left(\frac{2e}{(\log n)^{1/6}}\right)^{g_i}\right)=o(n^{-17/8}),\] by definition of $g_i$ and $f_i$ and by observing that $g_i\geq n^{1/8}/(\log n)^{1/2}$ for any $i\geq 2$.

Thus for any $i\in [1,M]$, with probability $1-o(n^{-17/8})$, strictly less than $g_i$ edges are dispensable among the first $f_i$ edges, hence strictly less than $g_i$ edges are dispensable among the edges incident to vertices in $U_{\leq L_i}$.

Therefore, with probability $1-o(M\cdot n^{-17/8})=1-o(n^{-2})$, for all $i\in [1,M]$, strictly less than $g_i$ edges are dispensable among the edges incident to vertices in $U_{\leq L_i}$.
This completes the proof for Property~\ref{prop:dispensable}  of the statement.
\end{proof}

We condition on the occurrence of the high probability event in Property~\ref{prop:dispensable} of \cref{lem:dispensable}.
Let $\mathcal{E}$ denote this event.

We say that a dispensable edge is \emph{trivial} if it is incident to $v$ or incident to $w$, and \emph{non-trivial} otherwise.
Let $E_0$ be the set of trivial dispensable edges.
Let $E_1$ be the set of non-trivial dispensable edges that are incident to vertices in $U_{\leq L_1}$.
The event $\mathcal{E}$ implies that strictly less than $g_1(=3)$ edges are dispensable among the edges incident to vertices in $U_{\leq L_1}$.
Hence $|E_0|+|E_1|\leq 2$.

Let $F_0\subseteq U_1$ be the set of vertices $u\in U_1$ such that $u$ is not incident to any trivial dispensable edge.
We claim that $|F_0|\geq 2\Delta-2|E_0|$.
If $E_0=\emptyset$, it is clear that $|F_0|=2\Delta$.
If $E_0\neq \emptyset$, there are three cases for each trivial dispensable edge in $E_0$: (1) a self-loop at $v$ or at $w$, (2) a parallel edge incident to $v$ or incident to $w$, and (3) an edge $(v,u)$ when $u$ is a neighbor of $w$, or an edge $(w,u)$ when $u$ is a neighbor of $v$.
In all the three cases, the existence of each trivial dispensable edge in $E_0$ decreases the size of $F_0$ by at most 2.
Hence $|F_0|\geq 2\Delta-2|E_0|$.

For each $u\in F_0$, define \[T(u)=\{x\in U_{\leq L_1}\mid \ell(x)=\delta(x,u)+1\}.\]
Let $F\subseteq F_0$ be the set of vertices $u\in F_0$ such that $T(u)$ contains no vertex incident to a dispensable edge in $E_1$.
Since each dispensable edge in $E_1$ is incident to two vertices, we have \[|F|\geq |F_0|-2|E_1|\geq 2\Delta-2|E_0|-2|E_1|\geq 2(\Delta-2).\]
Since $F\subseteq F_0\subseteq U_1$, one of $v$ and $w$ has at least $|F|/2\geq \Delta-2$ neighbors in $F$.

Without loss of generality, we assume that $v$ has at least $\Delta-2$ neighbors in $F$.
We show that, under this assumption, $|I_k(v)|\geq (\Delta-2-o(1))(\Delta-1)^{k-1}$.

Our proof proceeds in increasing order on $k\geq 1$.

First, consider any integer $k\in[1,L_1]$.
Let $u$ be any neighbor of $v$ in $F$.
Since $T(u)$ contains no vertex incident to a dispensable edge, $T(u)$ corresponds to a complete $(\Delta-1)$-ary tree.
Consider any vertex $x\in T(u)$ such that $\delta(v,x)=k$.
Any vertex $z\in V\setminus\{v\}$ such that $\delta(v,z)+\delta(z,x)=\delta(v,x)$ belongs to the (unique) shortest $x$-to-$u$ path.
Since the shortest $x$-to-$u$ path is completely within $T(u)$, we have $z\in T(u)$, thus the edges incident to $z$ are indispensable.
Hence $x$ is $v$-interesting according to \cref{def:interesting}.
Since $\delta(v,x)=k$, we have $x\in I_k(v)$.
There are at least $\Delta-2$ choices of $u$, and for a fixed $u$, there are $(\Delta-1)^{k-1}$ choices of $x$.
Therefore, the size of $I_k(v)$ is at least $(\Delta-2)(\Delta-1)^{k-1}$.

Next, consider any integer $k\in [L_1+1,L_2]$.
For any vertex $x\in I_{L_1}(v)$, define
\[T'(x)=\{y\in U_{\leq L_2}\mid \ell(y)=\delta(y,x)+L_1\}.\]
Let $F'\subseteq I_{L_1}(v)$ be the set of vertices $x\in I_{L_1}(v)$ such that $T'(x)$ contains no vertex incident to a dispensable edge.
The event $\mathcal{E}$ implies that strictly less than $g_2$ dispensable edges are incident to vertices in $U_{\leq L_2}$.
Since each dispensable edge is incident to two vertices, we have $|F'|> |I_{L_1}(v)|-2g_2$.
Let $x$ be any vertex in $F'$.
Since $T'(x)$ contains no vertex incident to a dispensable edge, $T'(x)$ corresponds to a complete $(\Delta-1)$-ary tree.
Consider any vertex $y\in T'(x)$ such that $\delta(v,y)=k$.
Any vertex $z\in V\setminus\{v\}$ such that $\delta(v,z)+\delta(z,y)=\delta(v,y)$ belongs either to the (unique) shortest $x$-to-$v$ path or to the (unique) shortest $y$-to-$x$ path.
In the first case, since $x$ is $v$-interesting, the edges incident to $z$ are indispensable by \cref{def:interesting}.
In the second case, since the shortest $y$-to-$x$ path is completely within $T'(x)$, we have $z\in T'(x)$, thus the edges incident to $z$ are indispensable.
Hence $y$ is $v$-interesting according to  \cref{def:interesting}.
Since $\delta(v,y)=k$, we have $y\in I_k(v)$.
There are $|F'|> |I_{L_1}(v)|-2g_2$ choices of $x$, and for a fixed $x$, there are $(\Delta-1)^{k-L_1}$ choices of $y$.
Therefore,
\begin{align*}
|I_k(v)|&> (|I_{L_1}(v)|-2g_2)\cdot (\Delta-1)^{k-L_1}\\
&\geq ((\Delta-2)(\Delta-1)^{L_1-1}-2g_2)\cdot (\Delta-1)^{k-L_1}\\
&= (\Delta-2-o(1/M))(\Delta-1)^{k-1},
\end{align*}
where the equality follows because $g_2=o((\Delta-1)^{L_1}/M)$ from \cref{lem:dispensable} and since $\Delta=O(1)$.

We move on to larger values of $k$.
Let $i$ be any integer in $[3,M]$.
From \cref{lem:dispensable}, strictly less than $g_i$ edges are dispensable among the edges incident to vertices in $U_{\leq L_i}$ and that $g_{i}=o\left((\Delta-1)^{L_{i-1}}/M\right)$.
For any integer $k\in[L_{i-1}+1,L_{i}]$, by extending the previous argument, we have
\[|I_k(v)|> (\Delta-2-i\cdot o(1/M))(\Delta-1)^{k-1}=(\Delta-2-o(1))(\Delta-1)^{k-1}.\]

We conclude that for any $k\in[1,L_M]$, we have $|I_k(v)|\geq (\Delta-2-o(1))(\Delta-1)^{k-1}$.
In the other case that $w$ has at least $\Delta-2$ neighbors in $F$, similarly, we have $|I_k(w)|\geq (\Delta-2-o(1))(\Delta-1)^{k-1}.$
Hence $|I_k(v)\cup I_k(w)|\geq (\Delta-2-o(1))(\Delta-1)^{k-1}$.
The event $\mathcal{E}$, on which the above analysis is conditioned, occurs with probability $1-o(n^{-2})$ according to \cref{lem:dispensable}.
Therefore, with probability $1-o(n^{-2})$, we have
\[|I_k(v)\cup I_k(w)|\geq (\Delta-2-o(1))(\Delta-1)^{k-1}, \text{ for any } k\in[1,L_M].\]
Again by \cref{lem:dispensable}, we have $L_M\geq \lceil \log_{\Delta-1} (3n/\log n) \rceil+2$.
Thus the above inequality holds for any positive integer $k\leq \lceil \log_{\Delta-1} (3n/\log n) \rceil+2$.

We complete the proof of \cref{lem:number-interesting}.

\subsubsection{Proof of the Structural Lemma (\cref{lem:Fvw}) using \cref{lem:interesting-distinguish,lem:number-interesting}}
\label{sec:proof-structral-lemma}
We set $k=\lceil \log_{\Delta-1} (3n/\log n) \rceil+2$.
By \cref{lem:interesting-distinguish}, $|D(v,w)|\geq |I_{k}(v)\cup I_{k}(w)|.$
By \cref{lem:number-interesting}, with probability $1-o(n^{-2})$, we have  \[|I_{k}(v)\cup I_{k}(w)|>(\Delta-2-o(1))(\Delta-1)^{k-1}\geq (\Delta-2-o(1))(\Delta-1)\cdot(3n/\log n),\]
where the last inequality follows from the definition of $k$.
Since $\Delta\geq 3$, we have $(\Delta-2-o(1))(\Delta-1)>1$.
Thus with probability $1-o(n^{-2})$, we have $|I_{k}(v)\cup I_{k}(w)|>3n/\log n$, which implies that $|D(v,w)|>3n/\log n$.

We complete the proof of \cref{lem:Fvw}.

\subsection{Proof of \cref{lem:random} using the Structural Lemma}
\label{sec:proof-random}
Let $G$ be a random graph and let $S$ be a random subset of vertices, both defined in the statement of \cref{lem:random}.
According to \cref{lem:B}, $\expect{G,S}{|\hat E\setminus E|}\leq \expect{G}{|B|}+o(1)$.
It suffices to prove that $\expect{G}{|B|}=o(1)$.

First, we consider the case when $\Delta=O(1)$ is such that $\Delta\geq 3$.
Our analysis is based on the configuration model.
Let $G'$ be a multigraph corresponding to a uniformly random configuration.
Let $\expect{G'}{|B|}$ denote the expected size of the set $B$ defined on $G'$.
Since each $\Delta$-regular graph corresponds to the same number of configurations and because the probability spaces of configurations and of $\Delta$-regular graphs, respectively, are uniform, we have
$\expect{G}{|B|}\leq \expect{G'}{|B|}/p$, where $p$ is the probability that $G'$ is both simple and connected.
According to~\cite{wormald1999models}, when $\Delta\geq 3$, $p\sim e^{(1-\Delta^2)/4}$, which is constant since $\Delta=O(1)$.
Thus $\expect{G}{|B|}=O(\expect{G'}{|B|})$.

In order to bound $\expect{G'}{|B|}$, consider any vertex pair $\{v,w\}$ in $G'$ such that $\delta(v,w)\geq 2$.
From the Structural Lemma (\cref{lem:Fvw}), the event $|D(v,w)|\leq 3n/\log n$ occurs with probability $o(n^{-2})$.
Equivalently, the event $|D(v,w)|\leq 3n\cdot (\log n)/s$ occurs with probability $o(n^{-2})$, since $s=\log^2 n$.
Thus the event $\{v,w\}\in B$ occurs with probability $o(n^{-2})$ according to the definition of $B$ in \cref{lem:B}.
There are $n(n-1)/2$ vertex pairs $\{v,w\}$ in $G'$.
By linearity of expectation, $\expect{G'}{|B|}$ is at most $o(n^{-2})\cdot n(n-1)/2=o(1)$.
Hence $\expect{G}{|B|}=O(\expect{G'}{|B|})=o(1)$.

In the special case when $\Delta=2$, a  2-regular graph $G$ is a ring.
Consider any vertex pair $\{v,w\}$ in $G$ such that $\delta(v,w)\geq 2$.
It is easy to see that at least $n-4$ vertices $u$ in the ring $G$ are such that $|\delta(u,v)-\delta(u,w)|> 1$, so $|D(v,w)|\geq n-4$ by \cref{def:distinguish}.
When $n$ is large enough, $n-4>3n/\log n$, so $|D(v,w)|>3n/\log n$.
Equivalently, we have $|D(v,w)|>3n\cdot (\log n)/s$, since $s=\log^2 n$.
Thus $\{v,w\}\notin B$ according to the definition of $B$ in \cref{lem:B}.
Therefore, $B=\emptyset$ and $\expect{G}{|B|}=0$.

We conclude that  $\expect{G}{|B|}=o(1)$ for any $\Delta=O(1)$. Thus $\expect{G,S}{|\hat E\setminus E|}\leq \expect{G}{|B|}+o(1)=o(1)$.

We complete the proof of \cref{lem:random}.

\begin{remark}
\label{remark:extra-work}
With more care in the construction of the sequences in \cref{lem:dispensable}, we can improve the bound in Property~\ref{prop:ell_M} of \cref{lem:dispensable} by 
\[L_M\geq \lceil \log_{\Delta-1} (3n/(\log \log n)^{2+\epsilon}) \rceil+2,\] for any $\epsilon>0$.
As a result, the range of $k$ in \cref{lem:number-interesting} can be extended to
$k\leq \lceil \log_{\Delta-1} (3n/(\log \log n)^{2+\epsilon}) \rceil+2$, and consequently, the event in \cref{lem:Fvw} can be replaced by $|D(v,w)|>3n/(\log \log n)^{2+\epsilon}$.
Therefore, \cref{lem:random} holds for $s=\log n \cdot(\log \log n)^{2+\epsilon}$.
This implies that the parameter $s$ in \cref{thm:random} can be reduced to $\log n \cdot(\log \log n)^{2+\epsilon}$.
\end{remark}

\section{Other Reconstruction Models (Proofs of \cref{cor:betweenness,cor:all-distances,cor:parallel})}
\label{sec:variant}
In this section, we study the reconstruction of random $\Delta$-regular graphs in the all-distances query model, in the betweenness query model, as well as in the parallel setting.

By extending the analysis from \cref{sec:random}, we observe that the set $\hat E$ constructed in \algoname (\cref{algo:main}) equals the edge set $E$ with high probability, in \cref{lem:single-iteration}.

\begin{lemma}
\label{lem:single-iteration}
Let $G$ be a uniformly random $\Delta$-regular graph with $\Delta=O(1)$.
Let $s=\log^2 n$.
Let $S\subseteq V$ be a set of $s$ vertices selected uniformly and independently at random from $V$.
With probability $1-o(1)$, $|\hat E|=\Delta n/2$.
In addition, the event $|\hat E|=\Delta n/2$ implies $\hat E=E$.
\end{lemma}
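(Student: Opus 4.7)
The plan is to deduce both assertions from the work already done in \cref{sec:random}, especially \cref{lem:random}. The second claim is immediate from counting: since $G$ is $\Delta$-regular, $|E|=\Delta n/2$, and by \cref{fact:hat-E} we always have $E\subseteq \hat E$. So an equality of sizes $|\hat E|=\Delta n/2=|E|$ forces $\hat E=E$. I would state this first as a one-line observation and thereby reduce the lemma to proving the probabilistic statement.

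For the probabilistic statement, the key identity is $|\hat E|=|E|+|\hat E\setminus E|=\Delta n/2+|\hat E\setminus E|$, so $|\hat E|=\Delta n/2$ is equivalent to $|\hat E\setminus E|=0$. I would then invoke \cref{lem:random}, which gives $\expect{G,S}{|\hat E\setminus E|}=o(1)$ for the parameter choice $s=\log^2 n$. Since $|\hat E\setminus E|$ is a nonnegative integer random variable, Markov's inequality yields
\[\proba{G,S}{|\hat E\setminus E|\geq 1}\leq \expect{G,S}{|\hat E\setminus E|}=o(1),\]
and therefore $|\hat E\setminus E|=0$, i.e.\ $|\hat E|=\Delta n/2$, holds with probability $1-o(1)$.

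There is no real obstacle here: all the heavy lifting is done in \cref{lem:random} (which itself rests on the Structural Lemma). The only thing to be careful about is that the probability space is over both $G$ and $S$, matching exactly the quantification in \cref{lem:random}; both sources of randomness are already accounted for in the expectation bound there, so the Markov step is clean.
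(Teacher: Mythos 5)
Your proposal is correct and follows essentially the same route as the paper: both invoke \cref{lem:random} together with Markov's inequality to get $\proba{G,S}{|\hat E\setminus E|\geq 1}=o(1)$, and both use \cref{fact:hat-E} with $|E|=\Delta n/2$ to convert the size condition into $\hat E=E$. No gaps.
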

\begin{proof}
From \cref{lem:random}, $\expect{G,S}{|\hat E\setminus E|}=o(1)$.
By Markov's inequality, the event that $|\hat E\setminus E|\geq 1$ occurs with probability $o(1)$.
Thus with probability $1-o(1)$, we have $\hat E\subseteq E$.
On the other hand, $E\subseteq \hat E$ by \cref{fact:hat-E}.
Therefore, the event that $\hat E= E$ occurs with probability $1-o(1)$, and this event occurs if and only if $|\hat E|=|E|$.
The statement follows since $|E|=\Delta n/2$ in a $\Delta$-regular graph.
\end{proof}

\subsection{A Modified Algorithm}
\cref{lem:single-iteration} enables us to design another reconstruction algorithm in the distance query model, called \variantname, which is a modified version of \algoname, see \cref{algo:variant}.
\variantname repeatedly computes a set $\hat E$ as in \algoname, until the size of $\hat E$ equals $\Delta n/2$.
The parameter $s$ is fixed to $\log^2 n$.

\begin{algorithm}
\caption{\variantname$(V)$}
\label{algo:variant}
\begin{algorithmic}[1]
    \Repeat
    \State $S\gets$ sample of $s=\log^2 n$ vertices selected uniformly and independently at random from $V$
    \For{$u\in S$ and $v\in V$}
         \query$(u,v)$
    \EndFor
    \State $\hat E\gets$ set of vertex pairs $\{a,b\}\subseteq V$ such that, for all $u\in S$, $|\delta(u,a)-\delta(u,b)|\leq 1$
    \Until $|\hat E|=\Delta n/2$
    \State \Return $\hat E$
\end{algorithmic}
\end{algorithm}

\begin{lemma}
\label{lem:variant}
Let $G$ be a uniformly random $\Delta$-regular graph with $\Delta=O(1)$.
In the distance query model, \variantname (\cref{algo:variant}) is a reconstruction algorithm, i.e., its output equals the edge set $E$.
The expected number of iterations of the \textbf{repeat} loop in \variantname is $1+o(1)$.
\end{lemma}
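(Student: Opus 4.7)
The plan is to verify the two parts of the lemma separately, using Lemma~\ref{lem:single-iteration} as the main input.

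For the correctness claim, the \textbf{repeat} loop exits only upon seeing $|\hat E|=\Delta n/2$, and the second half of Lemma~\ref{lem:single-iteration} states that this condition already implies $\hat E=E$. Hence whenever \variantname halts, its output equals the edge set $E$.

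For the bound on the expected number of iterations $N$, I would condition on the random graph $G$ and exploit that the sample $S$ drawn in each pass through the loop is independent of previous samples. Conditional on $G$, the iterations are i.i.d.\ Bernoulli trials, each succeeding with some probability $p(G)$ (the conditional probability that a single iteration produces $|\hat E|=\Delta n/2$). Hence $N\mid G$ is geometric with $\mathbb{E}[N\mid G]=1/p(G)$, and overall $\mathbb{E}[N]=\mathbb{E}_G[1/p(G)]$. Lemma~\ref{lem:single-iteration} gives $\mathbb{E}_G[p(G)]=\Pr_{G,S}[|\hat E|=\Delta n/2]=1-o(1)$, equivalently $\mathbb{E}_G[1-p(G)]=o(1)$.

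To upgrade this first-moment bound to $\mathbb{E}_G[1/p(G)]=1+o(1)$, I would split the expectation according to whether $p(G)\geq 1/2$. On the good event $\{p(G)\geq 1/2\}$, we have $1/p(G)\leq 1+2(1-p(G))$, so that contribution is $1+O(\mathbb{E}_G[1-p(G)])=1+o(1)$. The main obstacle is the rare complementary event $\{p(G)<1/2\}$, which by Markov has probability $O(\mathbb{E}_G[1-p(G)])=o(1)$ but where a crude upper bound on $1/p(G)$ is still required to control its contribution. This is handled by noting that the $o(1)$ appearing in Lemma~\ref{lem:random}, and hence in Lemma~\ref{lem:single-iteration}, is in fact polynomially small in $n$---inherited from the $o(n^{-2})$ rate proved in the Structural Lemma~\ref{lem:Fvw}---so that even a generous $\mathrm{poly}(n)$ upper bound on $1/p(G)$ (justified e.g.\ by the metric-dimension estimate of Corollary~\ref{cor:metric_dimention}, which ensures $p(G)>0$ with probability $1-o(1)$) still yields a total bad-event contribution of $o(1)$.
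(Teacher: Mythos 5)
Your correctness argument coincides with the paper's: termination of the \textbf{repeat} loop forces $|\hat E|=\Delta n/2$, which by \cref{lem:single-iteration} implies $\hat E=E$. For the expected number of iterations you are actually more careful than the paper, whose proof converts the per-iteration success probability $1-o(1)$ of \cref{lem:single-iteration} directly into an expected $1+o(1)$ iterations, implicitly treating the iterations as independent trials even though the graph $G$ is common to all of them. Your reduction to $\mathbb{E}_G[1/p(G)]$, and your observation that $\mathbb{E}_G[p(G)]=1-o(1)$ does not by itself control this quantity, are correct and identify a real subtlety.

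The genuine gap is your last step: the ``generous $\mathrm{poly}(n)$ upper bound on $1/p(G)$'' on the event $\{p(G)<1/2\}$ has no justification. \cref{cor:metric_dimention} is itself a with-high-probability statement about a random $G$ (and is derived from \cref{lem:single-iteration}), so it says nothing about graphs conditioned on lying in the rare bad event; and ``$p(G)>0$'' carries no quantitative content in any case. Worse, no uniform polynomial bound exists: for $\Delta\ge 3$ there are connected $\Delta$-regular graphs containing more than $\log^2 n$ disjoint pairs $\{a,b\}$ with identical neighbourhoods (chain together gadgets in which two non-adjacent vertices share all $\Delta$ neighbours). For such a pair, every $u\notin\{a,b\}$ satisfies $\delta(u,a)=\delta(u,b)$, so $D(a,b)=\{a,b\}$ and the sample $S$ must contain $a$ or $b$; a sample of $\log^2 n$ vertices cannot do this simultaneously for more than $\log^2 n$ disjoint pairs, so for these graphs $p(G)=0$ and the loop never terminates. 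Since such graphs have positive (albeit minuscule) probability under the uniform model, $\mathbb{E}_G[1/p(G)]$ cannot be bounded by your splitting argument. To close the argument along these lines one must either cap the number of iterations, or first deduce from \cref{lem:random} (via Markov over $G$) that with probability $1-o(1)$ over $G$ one has $p(G)=1-o(1)$, and then state the $1+o(1)$ bound for the expectation conditioned on that graph event --- which is also the reading needed to make the paper's own one-line deduction rigorous.
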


\begin{proof}
Upon termination of the \textbf{repeat} loop in \variantname, we have $|\hat E|=\Delta n/2$, which implies $\hat E=E$ by \cref{lem:single-iteration}.
Thus the output of \variantname equals the edge set $E$.

In each iteration of the \textbf{repeat} loop, the event that $|\hat E|=\Delta n/2$ occurs with probability $1-o(1)$ by \cref{lem:single-iteration}.
Thus the expected number of iterations of the \textbf{repeat} loop is $1+o(1)$.
\end{proof}

\subsection{All-Distances Query Model (Proof of \cref{cor:all-distances})}
By \cref{lem:variant}, \variantname is a reconstruction algorithm in the distance query model.
We extend \variantname to the all-distances query model.

Observe that in \variantname, the distance queries are performed between each sampled vertex $u\in S$ and all vertices in the graph.
This is equivalent to a single query at each sampled vertex $u\in S$ in the all-distances query model. Hence each iteration of the \textbf{repeat} loop in \variantname corresponds to $|S|=\log^2 n$ all-distances queries.
Again by \cref{lem:variant}, the expected number of iterations of the \textbf{repeat} loop in \variantname is $1+o(1)$.
Therefore, in the all-distances query model, an algorithm equivalent to \variantname reconstructs the graph using $(1+o(1))\cdot \log^2 n=O(\log^2 n)$ all-distances queries in expectation.

\subsection{Betweenness Query Model (Proof of \cref{cor:betweenness})}
In the betweenness query model, Abrahamsen~et~al.~\cite{abrahamsen2016graph} showed that $\tilde O(\Delta^2\cdot n)$ betweenness queries suffice to compute the distances from a given vertex to all vertices in the graph (it is implicit in Lemma~16 from~\cite{abrahamsen2016graph}), so an all-distances query can be simulated by $\tilde O(\Delta^2\cdot n)$ betweenness queries.
As a consequence of \cref{cor:all-distances}, we achieve a reconstruction algorithm using $\tilde O(\Delta^2\cdot n\cdot \log^2 n)=\tilde O(n)$ betweenness queries in expectation, since $\Delta=O(1)$.

\subsection{Parallel Setting (Proof of \cref{cor:parallel})}
By \cref{lem:variant}, \variantname is a reconstruction algorithm in the distance query model.
We analyze \variantname in the parallel setting.

Each iteration of the \textbf{repeat} loop consists of $n\cdot \log^2 n$ distance queries, and the distance queries within the same iteration of the \textbf{repeat} loop can be performed in parallel.
Again by \cref{lem:variant}, the expected number of iterations of the \textbf{repeat} loop in \variantname is $1+o(1)$.
Thus the expected number of rounds in \variantname is $1+o(1)$, and the expected number of distance queries in \variantname is $(1+o(1))\cdot n \cdot\log^2 n=\tilde O(n)$.

\section{Metric Dimension (Proof of \cref{cor:metric_dimention})}
\label{sec:metric_dimension}
In this section, we study the metric dimension of random $\Delta$-regular graphs.
To begin with, we show an elementary structural property of random $\Delta$-regular graphs, in \cref{lem:metric-dimension}, based on a classical result on those graphs.

\begin{lemma}
\label{lem:metric-dimension}
Let $G=(V,E)$ be a uniformly random  $\Delta$-regular graph with $\Delta=O(1)$.
With probability $1-o(1)$, for any edge $(a,b)$ of the graph $G$, there exists a vertex $c\in V\setminus \{a,b\}$ that is adjacent to $b$ but is not adjacent to $a$.
\end{lemma}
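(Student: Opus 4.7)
The plan is to bound the probability of the complementary event, namely that there is some edge $(a,b) \in E$ for which every neighbor of $b$ in $V \setminus \{a,b\}$ is also adjacent to $a$. By $\Delta$-regularity, $b$ has exactly $\Delta - 1$ neighbors in $V \setminus \{a,b\}$; hence the failure event at an ordered pair $(a,b)\in E$ is equivalent to the existence of distinct vertices $c_1, \ldots, c_{\Delta-1} \in V \setminus \{a,b\}$ such that all of the $2\Delta - 1$ specific edges $(a,b)$, $(a,c_i)$, $(b,c_i)$ for $i=1,\ldots,\Delta - 1$ are present in $G$. This identifies the bad event with a fixed $(\Delta+1)$-vertex, $(2\Delta-1)$-edge substructure.

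For $\Delta \geq 3$, I would estimate the expected number of such substructures using the configuration model, as in \cref{sec:random}. There are $\Theta(n^{\Delta+1})$ tuples $(a, b, \{c_1, \ldots, c_{\Delta-1}\})$ to check. For each fixed tuple, the probability that all $2\Delta - 1$ specified edges are simultaneously realized in a uniformly random configuration is $\Theta(n^{-(2\Delta - 1)})$: the number of ways to pair the involved points at $a$, $b$, and the $c_i$'s so as to realize these edges is a constant depending only on $\Delta$, while the ratio $(\Delta n - 4\Delta + 1)!! / (\Delta n - 1)!!$ contributes the factor $\Theta(n^{-(2\Delta - 1)})$. Multiplying, the expected number of bad tuples in the configuration model is $\Theta(n^{2-\Delta}) = o(1)$. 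Since the probability that the configuration yields a simple connected graph is bounded below by a positive constant (as invoked in \cref{sec:random} via~\cite{wormald1999models}), the same $o(1)$ bound survives conditioning, and Markov's inequality completes the argument for $\Delta \geq 3$.

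The case $\Delta = 2$ is handled deterministically: a connected $2$-regular graph is a single cycle of length $n$, and for $n \geq 4$ the unique second neighbor of $b$ along the cycle lies at distance $2$ from $a$, hence is not adjacent to $a$, so the desired vertex $c$ exists for every edge.

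The main obstacle is the configuration-model counting in the $\Delta \geq 3$ step. The computation itself is elementary but bookkeeping-heavy: one must enumerate the ways to pair the $\Delta$ points at each of the $\Delta + 1$ involved vertices so as to realize the prescribed $2\Delta - 1$ edges, and then verify that the resulting probability scales like $n^{-(2\Delta - 1)}$. Fortunately, only this order of magnitude is required for the final Markov argument, so the precise $\Delta$-dependent constants can be absorbed, which keeps the argument clean.
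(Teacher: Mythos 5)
Your argument is correct, but it follows a different route from the paper. You identify the exact failure structure at an edge $(a,b)$ — all $\Delta-1$ remaining neighbors of $b$ being common neighbors of $a$ and $b$, giving $\Delta+1$ vertices and $2\Delta-1$ prescribed edges — and kill it by a first-moment computation directly in the configuration model: $\Theta(n^{\Delta+1})$ tuples times $O(n^{-(2\Delta-1)})$ per tuple gives $O(n^{2-\Delta})=o(1)$ for $\Delta\geq 3$, and the $o(1)$ bound transfers to the uniform $\Delta$-regular graph since the configuration is simple and connected with probability bounded below by a constant. The paper instead works with a weaker (more inclusive) bad event: an edge $(a,b)$ having merely \emph{two} common neighbors $c_1,c_2$, which forces an induced subgraph on $4$ vertices with at least $5$ edges; it then cites a classical result (Lemma~11.12 in \cite{frieze_karonski_2015}) that with probability $1-o(1)$ no $k$-vertex induced subgraph has $k+1$ or more edges, and deduces $N(a)\neq N(b)$ from $|N(a)|=|N(b)|=\Delta-1\geq 2$. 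The trade-off: your proof is self-contained but requires the configuration-model bookkeeping you acknowledge (pairing the points at $a$, $b$, and the $c_i$'s, and the $(\Delta n-4\Delta+1)!!/(\Delta n-1)!!$ estimate, which is indeed $\Theta(n^{-(2\Delta-1)})$ since only an upper bound with $\Delta$-dependent constants is needed), whereas the paper's proof is shorter by delegating the counting to a standard sparsity lemma for random regular graphs; both treat $\Delta=2$ by the same deterministic cycle observation. One small point to make explicit in your write-up: the failure event in the simple graph $G$ implies the presence of the corresponding sub-matching in the configuration, so the conditioning step is a one-line inequality $\proba{}{\cdot\mid\text{simple, connected}}\leq \proba{}{\cdot}/p$ with $p=\Omega(1)$, exactly as the paper does in the proof of \cref{lem:random}.
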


\begin{proof}
First, consider the case when $\Delta=2$.
A $2$-regular graph is a ring.
Let $(a,b)$ be any edge of the graph.
The vertex $b$ has two neighbors, the vertex $a$ and another vertex, let it be $c$.
We have $c\in V\setminus \{a,b\}$ and $c$ is not adjacent to $a$ (as soon as $n>3$).
The statement of the lemma follows.

Next, consider the case when $\Delta=O(1)$ is such that $\Delta\geq 3$.
Let $\mathcal{E}$ denote the event that, for any edge $(a,b)$ of $G$, there do not exist two vertices $c_1$ and $c_2$ in $G$, such that all of the 4 edges $(a,c_1),(a,c_2),(b,c_1),(b,c_2)$ belong to $G$.
We show that $\mathcal{E}$ occurs with probability $1-o(1)$.
Indeed, if for some edge $(a,b)$ of $G$, there exist two vertices $c_1$ and $c_2$ such that $(a,c_1),(a,c_2),(b,c_1),(b,c_2)$ are edges of $G$, then the induced subgraph on $\{a,b,c_1,c_2\}$ consists of at least 5 edges.
A classical result on random $\Delta$-regular graphs shows that, for any constant integer $k$, the probability that there exists an induced subgraph of $k$ vertices with at least $k+1$ edges is $o(1)$, see, e.g., Lemma~11.12 in \cite{frieze_karonski_2015}.
Therefore, $\mathcal{E}$ occurs with probability $1-o(1)$.

We condition on the occurrence of $\mathcal{E}$.
For any edge $(a,b)$ of $G$, let $N(a)$ be the set of $\Delta-1$ neighbors of $a$ that are different from $b$, and let $N(b)$ be the set of $\Delta-1$ neighbors of $b$ that are different from $a$.
Since $\Delta\geq 3$, we have $|N(a)|=|N(b)|\geq 2$.
The event $\mathcal{E}$ implies that $N(a)\neq N(b)$, so there exists a vertex $c\in N(b)\setminus N(a)$.
By definition, $c$ is adjacent to $b$ but is not adjacent to $a$, and $c\in V\setminus \{a,b\}$.
Since $\mathcal{E}$ occurs with probability $1-o(1)$, we conclude that, with probability $1-o(1)$, for any edge $(a,b)$ of the graph $G$, there exists a vertex $c\in V\setminus\{a,b\}$ that is adjacent to $b$ but is not adjacent to $a$.
\end{proof}

\begin{definition}[e.g., \cite{bailey2011base,chartrand2000resolvability}]
\label{def:resolving}
A subset of vertices $S\subseteq V$ is a \emph{resolving set} for a graph $G=(V,E)$ if, for any pair of vertices $\{a,b\}\subseteq V$, there is a vertex $u\in S$ such that $\delta(u,a)\neq \delta(u,b)$.
The \emph{metric dimension} of $G$ is the smallest size of a resolving set for $G$.
\end{definition}

Based on the analysis of \algoname from \cref{lem:single-iteration} and  the structural property from \cref{lem:metric-dimension}, we show that, with high probability, a random subset of $\log^2 n$ vertices is a resolving set for a random $\Delta$-regular graph, in \cref{lem:landmarks}.

\begin{lemma}
\label{lem:landmarks}
Let $G=(V,E)$ be a uniformly random  $\Delta$-regular graph with $\Delta=O(1)$.
Let $S\subseteq V$ be a sample of $s=\log^2 n$ vertices selected uniformly and independently at random from $V$.
With probability $1-o(1)$, the set $S$ is a resolving set for the graph $G$.
\end{lemma}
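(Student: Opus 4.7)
The plan is to show that $S$ is a resolving set by combining three events, each occurring with probability $1-o(1)$: (a) $\hat{E}=E$ (\cref{lem:single-iteration}); (b) for every edge $(a,b)$ of $G$, there is a vertex $c=c(a,b)\in V\setminus\{a,b\}$ adjacent to $b$ but not to $a$ (\cref{lem:metric-dimension}); (c) for every pair $\{v,w\}$ with $\delta(v,w)\geq 2$, $|D(v,w)|>3n/\log n$. Event (c) is obtained by union-bounding the Structural Lemma (\cref{lem:Fvw}) over the $\binom{n}{2}$ pairs and passing from the configuration model to a uniformly random simple $\Delta$-regular graph at the cost of a constant factor, exactly as in the proof of \cref{lem:random}.

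Under event (a), every non-edge pair $\{a,b\}$ satisfies $\{a,b\}\notin\hat{E}$, so some $u\in S$ witnesses $|\delta(u,a)-\delta(u,b)|>1$ and therefore $\delta(u,a)\neq\delta(u,b)$. This handles non-edge pairs for free, and only edge pairs remain. The key new idea is the following observation: under (b), for every edge $(a,b)$, every vertex $u\in D(a,c(a,b))$ also resolves $(a,b)$. Indeed, since $\delta(a,b)=\delta(b,c)=1$ and $|\delta(u,a)-\delta(u,c)|\geq 2$, the triangle inequality on the distance function gives
\[
2 \leq |\delta(u,a)-\delta(u,c)| \leq |\delta(u,a)-\delta(u,b)| + |\delta(u,b)-\delta(u,c)| \leq \delta(a,b)+\delta(b,c) = 2,
\]
which forces $|\delta(u,a)-\delta(u,b)|=1$, and in particular $\delta(u,a)\neq\delta(u,b)$.

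Given this observation, under events (b) and (c), for each edge $(a,b)$ the set $D(a,c(a,b))$ is a subset of the resolvers of $(a,b)$ of size $>3n/\log n$. Since $S$ is independent of $G$ and $|S|=\log^2 n$, the probability that $S$ misses $D(a,c(a,b))$ is at most $(1-3/\log n)^{\log^2 n}\leq n^{-3}$; a union bound over the $|E|=O(n)$ edges absorbs this, yielding overall failure probability $o(1)$ on the edges. Combining with event (a) shows that $S$ is a resolving set with probability $1-o(1)$ when $\Delta\geq 3$. The degenerate case $\Delta=2$ (a ring) is handled by a routine separate argument. I expect the main conceptual step to be the triangle-inequality observation that converts a strong distinguisher of $\{a,c\}$ into a weak distinguisher of the edge $(a,b)$; once that is in place, the remainder is bookkeeping of the three probabilistic events.
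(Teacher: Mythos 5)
Your proposal is correct and follows essentially the same route as the paper: combine $\hat E=E$ (\cref{lem:single-iteration}) with the neighbour property of \cref{lem:metric-dimension}, handle non-edge pairs directly, and convert a strong distinguisher of $\{a,c\}$ into a resolver of the edge $(a,b)$ via the triangle inequality. The only difference is that your event (c) (Structural Lemma plus a fresh hitting-probability union bound, which also forces your separate $\Delta=2$ treatment) is redundant, since event (a) already guarantees that for every pair at distance at least $2$, in particular $\{a,c\}$, some $u\in S$ satisfies $|\delta(u,a)-\delta(u,c)|\geq 2$.
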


\begin{proof}
Let $\mathcal{E}_1$ denote the event that, for any edge $(a,b)$ of the graph $G$, there exists a vertex $c\in V\setminus\{a,b\}$ that is adjacent to $b$ but is not adjacent to $a$.
By \cref{lem:metric-dimension}, the event $\mathcal{E}_1$ occurs with probability $1-o(1)$.
Let $\mathcal{E}_2$ denote the event $\hat E=E$.
By \cref{lem:single-iteration}, the event $\mathcal{E}_2$ occurs with probability $1-o(1)$.
Thus with probability $1-o(1)$, both events $\mathcal{E}_1$ and $\mathcal{E}_2$ occur simultaneously.
We condition on the occurrences of both events $\mathcal{E}_1$ and $\mathcal{E}_2$ in the subsequent analysis.

First, consider any vertex pair $\{a,b\}\subseteq V$ such that $\delta(a,b)\geq 2$.
The event $\mathcal{E}_2$ implies that $\{a,b\}\notin \hat E$.
By definition, there exists some vertex $u\in S$ such that $|\delta(u,a)-\delta(u,b)|\geq 2$, which implies that $\delta(u,a)\neq \delta(u,b)$.

Next, consider any vertex pair $\{a,b\}\subseteq V$ such that $\delta(a,b)=1$.
The event $\mathcal{E}_1$ implies that there exists a vertex $c\in V\setminus\{a,b\}$ that is adjacent to $b$ but is not adjacent to $a$. Since $\delta(a,c)\geq 2$, the event $\mathcal{E}_2$ implies that $\{a,c\}\notin \hat E$.
By definition, there exists some vertex $u\in S$ such that $|\delta(u,a)-\delta(u,c)|\geq 2$.
Using an elementary inequality of $|x-y|+|y-z|\geq |x-z|$ for any three real numbers $x$, $y$, and $z$, we have
\begin{align*}
|\delta(u,a)-\delta(u,b)|&\geq |\delta(u,a)-\delta(u,c)|-|\delta(u,b)-\delta(u,c)|\\
&\geq |\delta(u,a)-\delta(u,c)|-\delta(b,c)&\text{(by triangle inequality)}\\
&\geq 2-\delta(b,c) & \text{(by definition of $u$)} \\
&\geq 1 & \text{(since $(b,c)$ is an edge)}&.
\end{align*}
Thus $\delta(u,a)\neq \delta(u,b)$.

Therefore, conditioned on the occurrences of both events $\mathcal{E}_1$ and $\mathcal{E}_2$, for any vertex pair $\{a,b\}\subseteq V$, there exists a vertex $u\in S$ such that $\delta(u,a)\neq \delta(u,b)$.

We conclude that, with probability $1-o(1)$, the set $S$ is a resolving set for $G$.
\end{proof}

From \cref{lem:landmarks}, with probability $1-o(1)$, the metric dimension of a random $\Delta$-regular graph is at most $\log^2 n$.
This completes the proof of \cref{cor:metric_dimention}. 
\section{Reconstruction of Bounded-Degree Graphs (Proof of \cref{thm:general})}
\label{sec:general}
In this section, we analyze \algoname (\cref{algo:main}) on general graphs of bounded degree in the distance query model.
Recall that a set $B$ of vertex pairs $\{a,b\}\subseteq V$ is defined in \cref{lem:B}.
For every vertex $a\in V$, we define the set of vertices $B(a)\subseteq V$ as 
\[B(a)=\big\{ b\in V \mid \{a,b\}\in B\big\}.\]
Intuitively, $B(a)$ consists of the vertices $b\in V$ that has few distinguishers with $a$.
We bound the size of the set $B(a)$ for any vertex $a$, in \cref{lem:Ba}.

\begin{lemma}
\label{lem:Ba}
Let $G$ be a general  graph of bounded degree $\Delta$.
For any vertex $a\in V$, $|B(a)|\leq 9\Delta^3\cdot n^2\cdot(\log^2 n)/s^2$.
\end{lemma}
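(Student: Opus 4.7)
The plan is to bound $|B(a)|$ by parameterizing each $b \in B(a)$ via a small number of ``anchor'' vertices that lie on a shortest $a$-to-$b$ path, each anchor coming from a provably small set, and then controlling the fiber of the parameterization map.

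First, for each $b \in B(a)$ with $d := \delta(a,b) \geq 2$, I would use the triangle inequality to show that any non-distinguisher $u$ of the pair $(a,b)$ satisfies $\delta(u,a) + \delta(u,b) \geq d$ and $\delta(u,a) + \delta(u,b) \leq 2\min(\delta(u,a),\delta(u,b)) + 1$, hence $\min(\delta(u,a),\delta(u,b)) \geq (d-1)/2$. Setting $r := \lceil(d-1)/2\rceil - 1$ and $T := 3n(\log n)/s$, it follows that the closed ball of radius $r$ around $a$ and the analogous ball around $b$ are both contained in $D(a,b)$, and they are disjoint because $r < d/2$. Combined with the hypothesis $|D(a,b)| \leq T$ (which comes from the definition of $B$ in \cref{lem:B}), this yields that both balls have size at most $T$.

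Second, for each $b \in B(a)$, I would fix a shortest path $a = v_0, v_1, \ldots, v_d = b$ and associate $b$ with the pair $(v_{r+1}, v_{r+2})$. The vertex $v_{r+1}$ lies in the BFS layer at distance $r+1$ from $a$, whose size is at most $\Delta$ times the size of the $r$-ball around $a$, hence at most $\Delta T$ by bounded-degree layer growth. The vertex $v_{r+2}$ is a neighbor of $v_{r+1}$, so has at most $\Delta$ options; in total, the number of pairs that arise (summed over feasible $r \leq r^*(a) := \max\{r : \text{the }r\text{-ball of }a\text{ has size} \leq T\}$) is $O(\Delta^2 T)$. Third, I would bound the fiber of the map $b \mapsto (v_{r+1}, v_{r+2})$ by $O(\Delta T)$: given anchors $(p,q)$, the target $b$ lies in the BFS layer of $q$ at distance $r$ or $r+1$ and has its $r$-ball of size $\leq T$; a double-counting argument combining the sparsity of the $r$-ball around $b$, the neighbor of $b$ on the shortest path from $q$ (which lies in that ball), and the bounded degree should yield the fiber bound. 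Multiplying the $O(\Delta^2 T)$ anchor pairs by the $O(\Delta T)$ fiber gives $|B(a)| \leq 9\Delta^3 T^2 = 9\Delta^3 n^2(\log^2 n)/s^2$.

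The main obstacle will be the third step: the naive estimate $|L_{r+1}(q)| \leq \Delta^{r+1}$ for the BFS layer from $q$ is exponentially too weak, so the fiber bound must genuinely exploit the sparsity constraint on the $r$-ball of $b$, rather than only the bounded degree. The subtle point is to identify which neighbor of $b$ on the shortest path from $q$ plays the role of ``predecessor'' and to show, via double counting over pairs $(b, \text{predecessor})$, that these predecessors lie in an effective set of size $O(T)$ because they all belong to the small $r$-balls of the fiber vertices. Handling the small-$r$ case (where $r = 0$ and $d \in \{2,3\}$) separately and accounting for the two choices of parity of $d$ then contributes the absolute constants that assemble the ``$9$'' factor in the final bound.
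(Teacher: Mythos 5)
Your overall parameterization is essentially the paper's: you anchor each $b\in B(a)$ at a vertex pair just past the midpoint of a shortest $a$-to-$b$ path and bound (number of anchor pairs)$\times$(fiber size); your anchor count is sound, since the ball of radius $r$ around $a$ is contained in $D(a,b)$ and hence has size at most $T=3n(\log n)/s$ by the definition of $B$ in \cref{lem:B}, and summing the disjoint layers up to $r^*(a)$ gives at most $\Delta T$ choices of $v_{r+1}$ and $\Delta^{2}T$ anchor pairs (this plays the role of \cref{lem:simple-2}, with a different but valid justification). The genuine gap is in your third step, which you yourself flag as the main obstacle: the fiber bound is never actually established, and the justification you sketch does not work. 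You say the predecessors of the fiber elements lie in an effective set of size $O(T)$ ``because they all belong to the small $r$-balls of the fiber vertices.'' Each predecessor lies only in the $r$-ball of \emph{its own} $b$; the union of these balls over the fiber is exactly the quantity you are trying to control, so this is circular. Nor do the fiber elements (or their predecessors) lie in the $r$-ball of a single fiber element $b_0$: two fiber elements can be at distance up to $2r$ from each other, and no sparsity bound is available for balls around the anchor $q$.

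The missing idea is precisely the paper's key mechanism (\cref{lem:simple-1}): the entire fiber of a common anchor is contained in $D(a,b_0)$ for a \emph{single} fiber element $b_0$, and $|D(a,b_0)|\leq T$ because $b_0\in B(a)$. Concretely, when $d=2r+2$ every fiber element $b$ has $q=v_{r+2}$ on its shortest $a$-to-$b$ path with $\delta(q,b)=\delta(q,a)-2$, so for any two fiber elements $\delta(b_1,b_0)\leq\delta(b_1,q)+\delta(q,b_0)=\bigl(\delta(a,b_1)-\delta(a,q)\bigr)+\bigl(\delta(q,a)-2\bigr)=\delta(a,b_1)-2$, i.e., $b_1\in D(a,b_0)$ by \cref{def:distinguish}; the same containment (applied to the set of predecessors) is what your double-counting route would also need. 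Moreover, the odd case $d=2r+3$ is not merely a matter of absolute constants as you claim: with your anchor $(v_{r+1},v_{r+2})$ the same computation only yields $\delta(b_1,b_0)\leq\delta(a,b_1)-1$, which does not distinguish, so the anchor must be pushed one step further (two steps beyond the midpoint), costing an extra factor $\Delta$ --- this is exactly why the paper's bound has $\Delta^{3}$ rather than $\Delta^{2}$. Finally, your constant bookkeeping is inconsistent: with $T=3n(\log n)/s$ you have $9\Delta^{3}T^{2}=81\Delta^{3}n^{2}(\log^{2}n)/s^{2}$, so the stated constant $9$ is not delivered by the sketch as written.
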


We defer the proof of \cref{lem:Ba} for the moment and first show how it implies \cref{thm:general}.

\begin{proof}[Proof of \cref{thm:general} using \cref{lem:Ba}]
By \cref{lem:query}, \algoname is a reconstruction algorithm using $n\cdot s+|\hat E|$ distance queries, and in addition, \algoname can be parallelized using 2 rounds.
It remains to further analyze the query complexity.

From \cref{fact:hat-E}, $|\hat E|=|E|+|\hat E\setminus E|$.
Since the graph has bounded degree $\Delta$, $|E|\leq \Delta n$.
From \cref{lem:B}, $\expect{S}{|\hat E\setminus E|}\leq |B|+o(1)$.
Therefore, the expected number of distance queries in \algoname is at most $n\cdot s+\Delta n+|B|+o(1)$.
It suffices to analyze $|B|$.

Observe that $|B|\leq \sum_{a\in V} |B(a)|$ by definition of $\{B(a)\}_{a\in V}$.
From \cref{lem:Ba}, $|B(a)|\leq 9\Delta^3\cdot n^2\cdot(\log^2 n)/s^2$, for any vertex $a\in V$.
Hence $|B|\leq (9\Delta^3\cdot n^2\cdot(\log^2 n)/s^2)\cdot n$.
Thus the expected number of distance queries in \algoname is at most $n\cdot s+ \Delta n+(9\Delta^3\cdot n^2\cdot(\log^2 n)/s^2)\cdot n+o(1)$, which is $\tilde O(n^{5/3})$ since $s=n^{2/3}$ and $\Delta=O(\polylog n)$.
\end{proof}

The rest of the section is dedicated to prove Lemma~\ref{lem:Ba}.

Let $a$ be any vertex in $V$.
Let $T$ be an (arbitrary) shortest-path tree rooted at $a$ and spanning all vertices in $V$.
For any vertex $b\in V$, let \emph{the shortest $a$-to-$b$ path} denote the path between $a$ and $b$ in the tree $T$.
To simplify the presentation, we assume that, for any $b\in B(a)$, $\delta(a,b)$ is even, so that the \emph{midpoint vertex} of the shortest $a$-to-$b$ path is uniquely defined.
We extend our analysis to the general setting in the end of the section.

For any vertex $m\in V$, define the set $B(a,m)\subseteq B(a)$ as
\[B(a,m)=\big\{b\in B(a) \mid \text{the midpoint vertex of the shortest $a$-to-$b$ path is $m$}\big\}.\]
Define the set $M(a)\subseteq V$ as
\[M(a)=\big\{m\in V\mid B(a,m)\neq \emptyset\big\}.\]
In other words, $M(a)$ consists of the vertices $m\in V$ that is the midpoint vertex of the shortest $a$-to-$b$ path for some $b\in B(a)$.
From the construction, we have
\begin{equation}
\label{eqn:Ba}
    B(a)=\bigcup_{m\in M(a)} B(a,m).
\end{equation}

In order to bound the size of $B(a)$, first we bound the size of $B(a,m)$ for any midpoint $m\in M(a)$,  in \cref{lem:simple-1}, and then we bound the number of distinct midpoints,  in \cref{lem:simple-2}.

\begin{lemma}
\label{lem:simple-1}
For any $m\in M(a)$, $|B(a,m)|\leq 3\Delta\cdot n\cdot(\log n)/s$.
\end{lemma}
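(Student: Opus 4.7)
My plan is to partition $B(a,m)$ according to the neighbor of $m$ that comes just after $m$ on the shortest $a$-to-$b$ path in $T$, and then contain each resulting piece inside the distinguisher set $D(a,b^\star)$ of an arbitrary representative $b^\star$. Write $k=\delta(a,m)$, so that every $b\in B(a,m)$ satisfies $\delta(a,b)=2k$ and (using the triangle inequality together with the fact that $m$ is the midpoint on the $T$-path from $a$ to $b$) also $\delta(m,b)=k$ in $G$. For each $b\in B(a,m)$, let $m'(b)$ denote the vertex right after $m$ on the shortest $a$-to-$b$ path in $T$, and set $B(a,m;m')=\{b\in B(a,m):m'(b)=m'\}$. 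Since $m$ has at most $\Delta$ neighbors in $G$, this partitions $B(a,m)$ into at most $\Delta$ classes, so it will suffice to bound each class by $3n(\log n)/s$.

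Given a nonempty class $B(a,m;m')$, I would pick any $b^\star\in B(a,m;m')$ and show $B(a,m;m')\subseteq D(a,b^\star)$. Because $b^\star\in B(a)$, the definition of $B$ in \cref{lem:B} gives $|D(a,b^\star)|\leq 3n(\log n)/s$, which yields the desired bound on $|B(a,m;m')|$. For the inclusion, the case $b=b^\star$ is immediate from $|\delta(b^\star,a)-\delta(b^\star,b^\star)|=2k\geq 2$. For $b\neq b^\star$ in $B(a,m;m')$, I would invoke the triangle inequality through the common vertex $m'$: both $b$ and $b^\star$ will satisfy $\delta(m',b)=\delta(m',b^\star)=k-1$ in $G$, so $\delta(b,b^\star)\leq 2k-2$, and therefore $|\delta(b,a)-\delta(b,b^\star)|=2k-\delta(b,b^\star)\geq 2$, placing $b$ in $D(a,b^\star)$.

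The step I expect to require the most care is justifying $\delta(m',b)=k-1$ as a $G$-distance rather than merely a $T$-distance. The upper bound follows directly from the $T$-subpath from $m'$ to $b$ having length $k-1$. For the matching lower bound, I would exploit that $T$ is a shortest-path tree rooted at $a$, so that $\delta_G(a,m')=k+1$, and then the triangle inequality gives $\delta_G(m',b)\geq \delta_G(a,b)-\delta_G(a,m')=2k-(k+1)=k-1$. Once this identity is in hand, summing over the at most $\Delta$ choices of $m'$ completes the proof of \cref{lem:simple-1}.
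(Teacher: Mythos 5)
Your proof is correct and follows essentially the same route as the paper's: the paper also covers $B(a,m)$ by the at most $\Delta$ choices of the vertex $m'$ following $m$ on the tree path (it takes the largest such class $Y(\hat m)$ and multiplies by $\Delta$ rather than summing), and then shows via the same triangle-inequality computation through $m'$ that every member of a class distinguishes $a$ from a fixed representative, so each class lies in $D(a,b^\star)$ and has size at most $3n(\log n)/s$ by \cref{lem:B}. Your extra care in verifying $\delta_G(m',b)=k-1$ using that $T$ is a shortest-path tree is sound and matches the paper's implicit use of the same fact.
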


\begin{proof}
For any $b\in B(a,m)$, the vertex $m$ is the midpoint vertex of the shortest $a$-to-$b$ path by definition.
From the assumption, $\delta(a,b)$ is even for any $b\in B(a,m)$, so there exists for some positive integer $\ell$, such that $\delta(m,a)=\ell$ and $\delta(m,b)=\ell$ for any $b\in B(a,m)$.

For every neighbor $m'$ of $m$ such that $\delta(a,m')=\delta(a,m)+1$, define a set $Y(m')\subseteq B(a,m)$ that consists of the vertices $b\in B(a,m)$ such that $m'$ is on the shortest $a$-to-$b$ path.
Let $\hat m$ be a neighbor of $m$ such that $\delta(a,\hat m)=\delta(a,m)+1$ and that $|Y(\hat m)|$ is maximized, see \cref{fig:simple-1}.
Since the graph has bounded degree $\Delta$, we have
$|B(a,m)|\leq \Delta\cdot|Y(\hat m)|$.
It suffices to bound $|Y(\hat m)|$.

The main observation is that any vertex of $Y(\hat m)$ distinguishes $a$ and any other vertex of $Y(\hat m)$.
To see this, let $b_0$ be any vertex in $Y(\hat m)$.
By definition, $\delta(a,\hat m)=\delta(a,m)+1=\ell+1$.
Since $\hat m$ is on the shortest $a$-to-$b_0$ path, we have $\delta(\hat m,b_0)=\delta(a,b_0)-\delta(a,\hat m)=\ell-1$, thus $\delta(\hat m,b_0)=\delta(\hat m,a)-2$.
For any vertex $b_1\in Y(\hat m)$, from the triangle inequalities on $\delta$, we have
\[\delta(b_1,b_0) \leq \delta(b_1,\hat m)+\delta(\hat m,b_0) =\delta(b_1,\hat m)+\delta(\hat m,a)-2 = \delta(b_1,a)-2.\]
According to \cref{def:distinguish}, the vertex $b_1$ distinguishes $a$ and $b_0$, and equivalently, $b_1\in D(a,b_0)$.
Thus we have $Y(\hat m)\subseteq D(a,b_0)$, hence $|Y(\hat m)|\leq |D(a,b_0)|\leq 3n\cdot (\log n)/s$ using the fact that $b_0\in Y(\hat m)\subseteq B(a)$ and the definition of $B$ in \cref{lem:B}.

We conclude that
$|B(a,m)|\leq \Delta\cdot |Y(\hat m)|\leq 3\Delta\cdot n\cdot(\log n)/s.$
\end{proof}

\begin{figure}[t]
\centering
\includegraphics[scale=0.675]{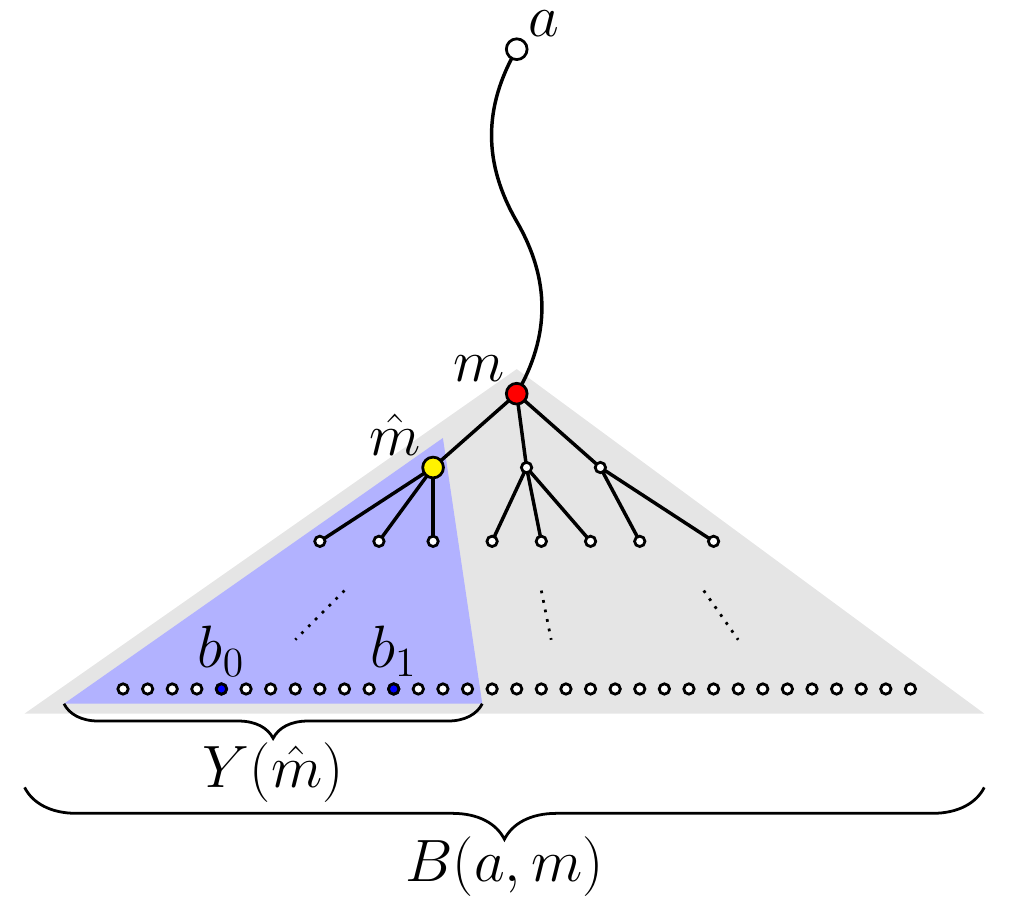}
\caption{
The vertex $m$ is the midpoint of the shortest path between $a$ and any vertex in $B(a,m)$.
The vertex $\hat m$ is a well-chosen neighbor of $m$.
Consider any vertex $b_0\in Y(\hat m)$.
We can show that any vertex $b_1\in Y(\hat m)$ distinguishes $a$ and $b_0$.
}
\label{fig:simple-1}
\end{figure}

\begin{lemma}
\label{lem:simple-2}
$|M(a)|\leq 3\Delta\cdot n\cdot(\log n)/s$.
\end{lemma}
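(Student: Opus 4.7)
My plan is to mirror the argument of \cref{lem:simple-1} at one higher level: replace ``a vertex $b\in B(a,m)$ for a fixed midpoint $m$'' by ``a midpoint $m\in M(a)$''. The factor $\Delta$ in the target bound should arise from a single pigeonhole over the at most $\Delta$ neighbors of $a$ in the tree $T$, and the factor $3n(\log n)/s$ from the inequality $|D(a,b_0)|\leq 3n(\log n)/s$ for a well-chosen $b_0\in B(a)$, which is exactly the defining property of pairs in $B$.

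First, I partition $M(a)$ according to the first step of the tree path from $a$. For every $m\in M(a)$ let $r(m)\in V$ be the unique neighbor of $a$ lying on the shortest $a$-to-$m$ path in $T$. Since $a$ has at most $\Delta$ neighbors in $T$, some $\hat a$ satisfies $|\hat M|\geq |M(a)|/\Delta$, where $\hat M:=\{m\in M(a):r(m)=\hat a\}$. Thus it suffices to prove $|\hat M|\leq 3n(\log n)/s$.

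Next, I pick a deepest representative: let $m_0\in\hat M$ maximize $\ell_0:=\delta(a,m_0)$ and fix $b_0\in B(a,m_0)$. Because the shortest $a$-to-$b_0$ path in $T$ passes through $m_0$ and hence through $\hat a$, we have $\delta(\hat a,b_0)=2\ell_0-1$ and $\delta(\hat a,m)=\ell_m-1$ for every $m\in\hat M$, writing $\ell_m:=\delta(a,m)\leq\ell_0$. The goal is to show that essentially every $m\in\hat M$ belongs to $D(a,b_0)$, so that $|\hat M|$ is controlled by $|D(a,b_0)|$. The key computation is the reverse triangle inequality through $a$:
\[\delta(m,b_0)\;\geq\;\delta(a,b_0)-\delta(a,m)\;=\;2\ell_0-\ell_m,\]
which gives $\delta(m,b_0)-\delta(m,a)\geq 2(\ell_0-\ell_m)$, strictly greater than $1$ as soon as $\ell_m<\ell_0$. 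In that case $m$ distinguishes $a$ and $b_0$, so $m\in D(a,b_0)$, handling every midpoint in $\hat M$ that is strictly shallower than $m_0$.

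The main obstacle is the remaining same-level set $\{m\in\hat M:\ell_m=\ell_0\}$, for which the displayed inequality degenerates to $\delta(m,b_0)\geq\ell_0$ and does not force a distinguishing gap. I expect to dispose of this set by an auxiliary argument: either by applying a second pigeonhole over a deeper common ancestor in $T$ below $\hat a$ (any two distinct same-level midpoints in $\hat M$ must diverge in $T$ strictly below $\hat a$, so grouping them by their first common ancestor below $\hat a$ subdivides them further and, combined with maximality, lets the reverse triangle inequality bite), or by a more careful choice of $m_0$ and $b_0$ that absorbs the same-level exceptions into the constant overhead compatible with the constant $3$ in the statement. Once the same-level case is handled, the definition of $B$ yields $|\hat M|\leq 3n(\log n)/s$, and hence $|M(a)|\leq \Delta\cdot 3n(\log n)/s=3\Delta n(\log n)/s$, completing the proof.
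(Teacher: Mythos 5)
There is a genuine gap, and you have identified it yourself: the same-level set $\{m\in\hat M:\ell_m=\ell_0\}$ is not handled, and neither of your two suggested fixes works as stated. This set is not a small exception that can be "absorbed into the constant $3$": in a bounded-degree graph the midpoints at the common deepest level below $\hat a$ can number up to $(\Delta-1)^{\ell_0-1}$, i.e., they can constitute essentially all of $M(a)$ (take every $b\in B(a)$ at the same even distance $2\ell_0$ from $a$, so that every midpoint has $\ell_m=\ell_0$). For such $m$ the reverse triangle inequality through $a$ only gives $\delta(m,b_0)-\delta(m,a)\geq 0$, and indeed a same-level midpoint $m'$ close to $m_0$ (say a sibling at graph distance $2$) satisfies $\delta(m',b_0)\leq \ell_0+2$, so it need not distinguish $a$ and $b_0$ at all. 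A second pigeonhole over deeper common ancestors does not help, because regrouping same-level midpoints does not change the fact that the distinguishing inequality with respect to $b_0$ degenerates for every one of them; and no choice of $m_0,b_0$ among the midpoints themselves can repair this, since the obstruction is that the distinguisher candidates sit at the same depth as the deepest one.

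The missing idea (and the one the paper uses) is to shift the candidate distinguishers one level shallower and to compare against a globally deepest target rather than a per-branch one: for each $m\in M(a)$ take $x_m$, the second-to-last vertex on the shortest $a$-to-$m$ path, let $X(a)=\{x_m\}$, and let $b^*\in B(a)$ maximize $\delta(a,b^*)=2\ell$. Then every $m$ has $\delta(a,m)\leq\ell$, so every $x\in X(a)$ has $\delta(a,x)\leq\ell-1$, and the triangle inequality gives $\delta(b^*,x)-\delta(a,x)\geq 2\ell-2(\ell-1)=2$, so $X(a)\subseteq D(a,b^*)$ and $|X(a)|\leq 3n(\log n)/s$ by the definition of $B$. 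Since each $x$ can serve as $x_m$ for at most $\Delta$ midpoints, $|M(a)|\leq\Delta|X(a)|$, which is exactly the claimed bound; the factor $\Delta$ comes from this last-step map, not from a pigeonhole over the neighbors of $a$. Your argument for the strictly shallower midpoints is correct, but without this shift (or some equivalent device) the deepest level defeats the proof.
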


\begin{proof}
For each vertex $m\in M(a)$, denote $x_m$ as the second-to-last vertex on the shortest $a$-to-$m$ path.
Denote $X(a)\subseteq V$ as the set of vertices $x_m$ for all $m\in M(a)$.
See \cref{fig:simple-2}.
Since $G$ has bounded degree $\Delta$, we have
$|M(a)|\leq \Delta\cdot |X(a)|$.
It suffices to bound $|X(a)|$.

Let $b^*$ be a vertex in $B(a)$ such that $\delta(a,b^*)$ is maximized.
From the assumption, $\delta(a,b^*)$ is even, so we denote $\delta(a,b^*)=2\ell$ for some positive integer $\ell$.

The main observation is that any vertex of $X(a)$ distinguishes $a$ and $b^*$.
To see this, let $x$ be any vertex in $X(a)$.
Let $m$ be any vertex in $M(a)$ such that $x$ is the second-to-last vertex on the shortest $a$-to-$m$ path.\footnote{Such a vertex $m$ exists according to the construction of $X(a)$.}
We have $\delta(a,m)\leq \ell$ and $\delta(a,x)=\delta(a,m)-1\leq \ell-1.$
By the triangle inequality on the distances, $\delta(b^*,x)\geq \delta(a,b^*)-\delta(a,x)\geq 2\ell-(\ell-1)=\ell+1$.
Thus $\delta(b^*,x)-\delta(a,x)\geq 2$.
According to \cref{def:distinguish}, the vertex $x$ distinguishes $a$ and $b^*$, and equivalently, $x\in D(a,b^*)$.
Thus $X(a)\subseteq D(a,b^*)$, hence $|X(a)|\leq |D(a,b^*)|\leq 3n\cdot (\log n)/s$ using the fact that $b^*\in B(a)$ and the definition of $B$ in \cref{lem:B}.

We conclude that $|M(a)|\leq \Delta\cdot |X(a)|\leq 3\Delta\cdot n \cdot (\log n)/s.$
\end{proof}

\begin{figure}[t]
\centering
\includegraphics[scale=0.675]{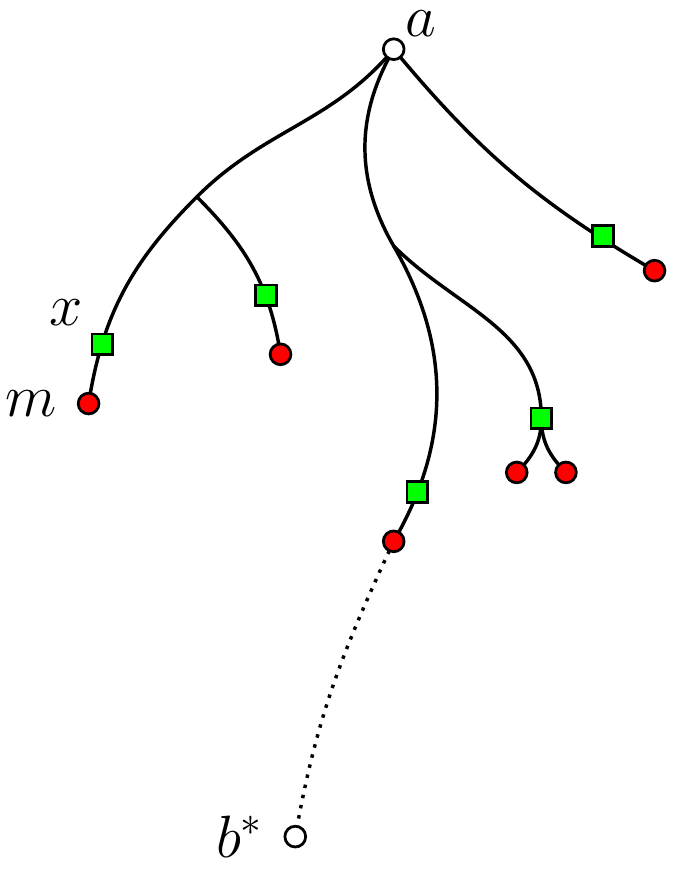}
\caption{
Solid circular nodes represent the vertices $m\in M(a)$.
Solid curves represent the shortest $a$-to-$m$ paths.
Solid square nodes represent the vertices in $X(a)$.
Denote $b^*$ as a vertex in $B(a)$ that is farthest from $a$.
We can show that any vertex $x\in X(a)$ distinguishes $a$ and $b^*$.
}
\label{fig:simple-2}
\end{figure}

From \cref{eqn:Ba}, $|B(a)|\leq \sum_{m\in M(a)}|B(a,m)|$.
From \cref{lem:simple-1}, $|B(a,m)|\leq 3\Delta\cdot n\cdot(\log n)/s$ for every $m\in M(a)$.
From \cref{lem:simple-2}, $|M(a)|\leq 3\Delta\cdot n\cdot(\log n)/s$.
Therefore, $|B(a)|\leq 9\Delta^2\cdot n^2\cdot(\log^2 n)/s^2$.

Finally, consider the general setting in which $\delta(a,b)$ is not necessarily even for any $b\in B(a)$.
For a vertex $m$ on the shortest $a$-to-$b$ path, we say that $m$ is the \emph{midpoint vertex} of that path if $\delta(a,m)=\lfloor\delta(a,b)/2\rfloor$.
The definitions of $B(a,m)$ and $M(a)$ remain the same.
\cref{lem:simple-2} holds in the same way.
In \cref{lem:simple-1}, the upper bound of $|B(a,m)|$ is replaced by $3\Delta^2\cdot n\cdot(\log n)/s$. Indeed, to extend the proof of \cref{lem:simple-1}, instead of considering vertex $m'$ (resp., vertex $\hat m$) that is a neighbor of $m$, we consider $m'$ (resp., $\hat m$) that is at distance 2 from $m$.
We have $|B(a,m)|\leq \Delta^2\cdot|Y(\hat m)|$.
The bound $|Y(\hat m)|\leq 3n\cdot (\log n)/s$ remains the same, so we have $|B(a,m)|\leq 3\Delta^2\cdot n\cdot(\log n)/s.$
Hence $|B(a)|\leq 9\Delta^3\cdot n^2\cdot(\log^2 n)/s^2$.

We complete the proof of \cref{lem:Ba}.
Therefore, we obtain \cref{thm:general}.

\bibliography{references}

\end{document}